\date{} 
\theoremstyle{definition}
\newtheorem{definition}{Definition}
\newtheorem{example}{Example}
\theoremstyle{plain}
\newtheorem{theorem}{Theorem}
\newtheorem{lemma}{Lemma}
\newtheorem{proposition}{Proposition}
\newtheoremstyle{bfnote}
{}{}%
{}{}%
{\bfseries}{.}%
{ }%
{\thmname{#1}\thmnumber{ #2}\thmnote{ (#3)}}
\theoremstyle{bfnote}
\newcommand{\supp}{\mathrm{supp}} 
\newcommand{\Omit}[1]{} 
\theoremstyle{remark}
\newtheorem{step}{Step}
\DeclareMathOperator*{\argmax}{arg\,max}
\DeclareMathOperator*{\argmin}{arg\,min}
\newcommand{\norm}[1]{\| #1 \|}
\newcommand{\prof}{C}
\newcommand{\csum}{|C|}
\newcommand{\cnom}{C} 
\newcommand{\nash}[1][]{\ifthenelse{\equal{#1}{}}{\ensuremath{\mathit{NASH}}\xspace}{\ensuremath{\mathit{NASH}(#1)}\xspace}}
\newcommand{\util}[1][]{\ifthenelse{\equal{#1}{}}{\ensuremath{\mathit{UTIL}}\xspace}{\ensuremath{\mathit{UTIL}(#1)}\xspace}}
\newcommand{\cut}[1][]{\ifthenelse{\equal{#1}{}}{\ensuremath{\mathit{CUT}}\xspace}{\ensuremath{\mathit{CUT}(#1)}\xspace}}
\newcommand{\anticut}[1][]{\ifthenelse{\equal{#1}{}}{\ensuremath{\mathit{ANTICUT}}\xspace}{\ensuremath{\mathit{ANTICUT}(#1)}\xspace}}
\newcommand{\unc}[1][]{\ifthenelse{\equal{#1}{}}{\ensuremath{\mathit{UNC}}\xspace}{\ensuremath{\mathit{UNC}(#1)}\xspace}}
\newcommand{\ceg}[1][]{\ifthenelse{\equal{#1}{}}{\ensuremath{\mathit{CEG}}\xspace}{\ensuremath{\mathit{CEG}(#1)}\xspace}}
\newcommand{\egal}[1][]{\ifthenelse{\equal{#1}{}}{\ensuremath{\mathit{EGAL}}\xspace}{\ensuremath{\mathit{EGAL}(#1)}\xspace}}
\renewcommand{\epsilon}{\varepsilon}
\newcommand{\profiles}{\mathcal{C}}
\title{\vspace{-3ex}Funding Public Projects:\\ A Case for the Nash Product Rule} 
\author{Florian Brandl\\ Bonn  \and Felix Brandt\\ TUM \and Matthias Greger\\ TUM \and Dominik Peters\\ Toronto  \and Christian Stricker\\ TUM \and Warut Suksompong\\ NUS}
\begin{document}

\maketitle

\begin{abstract}
We study a mechanism design problem where a community of agents wishes to fund public projects via voluntary monetary contributions by the community members. This serves as a model for public expenditure without an exogenously available budget, such as participatory budgeting or voluntary tax programs, as well as donor coordination when interpreting charities as public projects and donations as contributions. Our aim is to identify a mutually beneficial distribution of the individual contributions. In the preference aggregation problem that we study, agents report linear utility functions over projects together with the amount of their contributions, and the mechanism determines a socially optimal distribution of the money. We identify a specific mechanism---the Nash product rule---which picks the distribution that maximizes the product of the agents' utilities. This rule is Pareto efficient, and we prove that it satisfies attractive incentive properties: it spends each agent's contribution only on projects the agent finds acceptable, and agents are strongly incentivized to participate. 

\end{abstract}

\noindent\textbf{Keywords:} public goods provision, collective decision making, participation incentives

\section{Introduction}
\label{sec:intro}

Italian tax payers have the option, under the \emph{cinque per mille} program, to redirect 0.5\% of their personal income tax to a non-profit organization of their choice. To participate, tax payers enter an organization's tax code into their tax return, choosing from a catalog of about 600 research organizations, 10,000 sports organizations, or 47,000 voluntary organizations.\footnote{\url{https://www.agenziaentrate.gov.it/portale/web/guest/elenchi-versione-precedenti}}
Participating in this program is a good choice for anyone who believes that funding for at least one of these organizations would do more good than additional tax income to the Italian government. In 2017, more than 10 million tax payers participated, for a total payout of more than 300 million euros.\footnote{\url{http://www.vita.it/it/article/2019/03/27/5-per-mille-2017-ecco-gli-elenchi-a-un-passo-dai-500-mln/151058/}}
However, one might worry that the allocation of funding to the organizations is inefficient because too little information about the participants' preferences is elicited.
Each person only indicates a single organization, but presumably they would be happy to support any of several organizations.
If we knew this approval information, we would likely be able to find an allocation that everyone prefers, in the sense that the money directed to approved organizations would be larger for each tax payer.

Suppose we convinced the Italian government to allow tax returns to indicate a list of organizations rather than just one. Given this information, how should we decide on the allocation of funds? A simple way to ensure a Pareto efficient outcome would be to maximize utilitarian welfare: one could define an individual's welfare as the amount of money disbursed to approved organizations and then maximize the sum of the welfare of each participating tax payer.  
The result would be that all the available funds would be disbursed to the (usually unique) organization that received the most ``votes''. While this is efficient, it fails to provide the participation incentives of the current system: one additional vote is unlikely to change which organization is most popular, and those who do not think that this organization is worth funding will choose to not participate.

It is in fact quite difficult to find an allocation mechanism that retains the strong participation incentives of the naive system (where each agent chooses just one organization) and also selects an efficient outcome.
Mechanisms that spend each voter's contribution only on approved organizations tend to fail efficiency. To narrow down the search, we can observe that any mechanism that incentivizes participation must also satisfy some natural group fairness axioms: for example, if a group of voters all approve the same list of organizations, then the mechanism must spend the accumulated tax contribution of the group on organizations on this list. A result by \citet{BMS02a} about group fairness implies that among separable social welfare functions, there is only a single candidate that might work: maximizing the \emph{Nash product}, which selects the allocation of funds that maximizes the product (rather than the sum) of utilities. In this paper, we prove that the Nash product rule indeed incentivizes participation at least as much as the naive rule. In particular, if a tax payer chooses to participate by submitting a list of approved organizations, we can guarantee that the money allocated to those organizations grows by at least her individual tax contribution. In fact, it can grow by more than that, because the Nash product rule may choose to redirect others' contributions to these organizations.

This result makes the Nash product rule an attractive choice in many other contexts, where we wish to incentivize voluntary monetary contributions to a common pool which is to be spent on public projects in an efficient manner.
Examples of communities that face this problem might include residents of an apartment complex (who want to coordinate spending on gardening in a courtyard, or on cleaning services), homeowners on a city street (to coordinate tree care, snow removal, or security patrols), or student clubs in a university (to coordinate funding for events and meet-ups).

Charitable donations provide another important application. Typically, these are undertaken independently without coordination among donors. As a consequence, mutual interest in the same charities goes unnoticed, even though the utility of all donors could be increased through coordination. For any given community of donors, such as the employees of a company running an annual charity matching program%
\footnote{For example, Microsoft and Apple run such programs. In 2020, Microsoft employees donated over \$110 million to charities, which the company doubled to over \$220 million.
 (\url{https://www.microsoft.com/en-us/corporate-responsibility/philanthropies/employee-engagement}).
Since its inception in 2011, Apple's charity matching program raised nearly \$600 million in total donations for more than 34,000 organizations (\url{https://www.apple.com/newsroom/2020/12/a-landmark-year-of-giving-from-apple/}).
} 
or donors using charitable giving mechanisms like so-called ``donor-advised funds'' offered by the same asset manager%
\footnote{For example, the asset manager Fidelity Charitable made over \$9 billion in donor-recommended grants in 2020 to 170,000 organization via donor-advised funds (\url{https://www.fidelitycharitable.org/insights/2021-giving-report.html}).
}, 
introducing a voting system based on the Nash product rule could produce better outcomes.
The same conclusion holds in higher-stakes applications involving major philanthropic foundations. Notably, the Open Philanthropy Project, which grants more than \$100 million a year to various organizations, has called on academics to develop mechanisms to combine different staff members’ views on the most effective giving opportunities, and also to help coordinate the giving of different philanthropic organizations \citep{Mueh17a}. Interest in donor coordination mechanisms has also been expressed in the effective altruism community \citep{Pete19b}, many of whose members have pledged to donate 10\% of their income to effective charities. \\

While our full model allows agents to specify fine-grained utilities, for ease of exposition, we will start by assuming that each agent only submits a list of approved projects and that the agent is indifferent among these.
We also ask each agent $i$ to commit a monetary amount $C_i \in [0,B_i]$ to contribute to the funding system within her own personal budget $B_i$. 
The approvals are interpreted as dichotomous utility functions, so that $u_i(x) \in \{0,1\}$ is the utility per unit of money that agent $i$ assigns to project $x$. For a distribution $\delta$ of the overall collected contributions to projects, agent $i$’s utility of distribution $\delta$ is defined as $u_i(\delta) := \sum_{x} u_i(x)\delta(x)$, where $\delta(x)$ is the amount spent on project $x$. With dichotomous utilities, $u_i(\delta)$ is just the total amount of money that $\delta$ spends on projects approved by $i$.

Our main result concerns contribution incentives. We wish to assure agents that it is beneficial for them to contribute their whole budget to the mechanism. 
In other words, if $i$ contributes an additional amount $\epsilon > 0$ of money, then the total amount spent on projects approved by $i$ needs to increase by at least $\epsilon$.
Formally, if the mechanism selects distribution $\delta$ when $i$ contributes $C_i \in [0,B_i-\epsilon]$, and selects distribution $\delta'$ when $i$ contributes $C_i + \epsilon$ with $\epsilon > 0$, then we must guarantee that $u_i(\delta') \ge u_i(\delta) + \epsilon$.
We call this property \emph{contribution incentive-compatibility}.

While this property may seem mild on first sight, it is difficult to satisfy together with efficiency.
A naive procedure where each agent's contribution is split uniformly between her approved projects, violates efficiency. Maximizing utilitarian welfare among all distributions is efficient, but severely violates contribution incentive-compatibility, because the mechanism may spend agent $i$'s contribution on projects that are not acceptable to $i$. If we constrain the welfare maximization to distributions where each agent's contribution is only spent on acceptable projects (aka the ``conditional utilitarian rule''), contribution incentive-compatibility is satisfied but we lose efficiency. Replacing the utilitarian objective with a Rawlsian leximin objective does not work either, and mechanisms based on serial dictatorships fail, too.

Remarkably, the Nash product rule described above combines efficiency and contribution incentives. The Nash product rule selects the distribution $\delta$ that maximizes $\prod_{i\in N} u_i(\delta)^{C_i}$, where $C_i$ is the size of $i$’s contribution and $N$ is the set of agents. Since this rule maximizes a monotonic function of agents' utilities, its outcome is guaranteed to be efficient. 
While it is easy to see that the Nash product rule satisfies $u_i(\delta') \ge u_i(\delta)$ when $i$ contributes an additional amount $\epsilon$, it is more difficult to establish that we have $u_i(\delta') \ge u_i(\delta) + \epsilon$ as required by contribution incentive-compatibility.
Our main result, \Cref{thm:nashir}, shows that this property is satisfied by the Nash product rule.
The proof reasons about the trajectory of the maximizer of the Nash product as a function of agent $i$’s contribution. 
We derive a lower bound for the derivative of agent $i$'s utility along this trajectory. 
Integrating this bound yields the result.

The Nash product rule is the only mechanism known to us that is both efficient and contribution incentive-compatible for an arbitrary number of projects. It is plausible that the mechanism is characterized by these properties, but we could not establish this. 
However, as discussed in \Cref{sec:related}, two characterizations by \citet{BMS02a} and \citet{GuNe14a} imply that, at least when imposing further strong assumptions, the Nash product rule is characterized by contribution incentive-compatibility. 
The Nash product rule satisfies an additional property that is important in our model: it spends the contribution of user $i$ only on projects that are approved by $i$. Formally, we say that a distribution $\delta$ is \emph{decomposable} if we can decompose it as $\delta = \delta_1 + \cdots + \delta_n$ such that $\delta_i$ spends exactly $C_i$, and spends it only on projects acceptable to $i$. In \Cref{thm:nashimp}, we prove that the distribution selected by the Nash product is always decomposable.
This provides an intuitive justification of the chosen distribution to agents. 
Decomposability can also be interpreted as an incentive property: in some applications, the money is not distributed to projects by a central clearinghouse, but the mechanism's output is just used as a recommendation to agents where to direct their contribution.
In this case, decomposability becomes essential, since a recommendation to send money to an unacceptable project is likely to be ignored. There do exist artificial mechanisms other than the Nash product rule that are efficient and decomposable,%
\footnote{To construct such mechanisms, note that we can modify an efficient distribution and retain efficiency as long as we do not increase its support \citep[see, e.g.,][]{ABB14b}.
So we can, for example, take the support of the Nash product distribution and let every agent assign her entire contribution to one of her most preferred projects within the support. The resulting distribution is efficient and decomposable.} 
but the Nash product rule is the only such mechanism known to us that arises naturally from the maximization of a social welfare function.
The Nash product rule not only satisfies decomposability, but, moreover, the fraction of agent $i$'s contribution to project $x$ is directly proportional to the utility $\delta(x)u_i(x)$ she derives from $x$ in the Nash product distribution $\delta$.
Thus, agents can then easily compute their individual distributions $\delta_i$ once they know $\delta$. We leverage this observation to construct a simple, dynamic procedure in which agents iteratively revise their contributions to projects in proportion to the utility they receive from the distribution of the previous round.
A result of \citet{Cove84a} from the theory of optimal portfolio selection implies that this procedure approximates a Nash product distribution arbitrarily well as the number of rounds goes to infinity. In \Cref{thm:dynamicNash}, we state this result and give a compact proof tailored to our setting. Hence, the Nash product rule arises naturally from a simple decentralized spending dynamic.

Our results generalize beyond the case of dichotomous utilities. 
In our formal treatment, we allow agents to indicate arbitrary utility values $u_i(x) \ge 0$ for the projects, and extend these to distributions as linear utilities as before, so $u_i(\delta) = \sum_{x} u_i(x)\delta(x)$. 
The Nash product rule works for this more general class of utilities, and in particular it retains efficiency.
It also continues to satisfy contribution incentive-compatibility and decomposability, but these two properties hold in a weak sense that only distinguishes \emph{acceptable} projects with strictly positive utility $u_i(x) > 0$ from \emph{unacceptable} ones with $u_i(x) = 0$. Contribution incentive-compatibility guarantees that the amount spent on acceptable project grows by $\epsilon$ when an extra amount of $\epsilon$ is contributed, and decomposability guarantees that an agent's contribution is only spent on acceptable projects.
In \Cref{sec:limits}, we discuss strengthened versions of decomposability and contribution incentive-compatibility that use fine-grained utilities and provide guarantees based on an agent's most-preferred project (for example, strong decomposability requires that an agent's contribution is only spent on most-preferred projects). These two stronger properties may be desirable, but we prove impossibility theorems that show that each of the two strengthened axioms is incompatible with efficiency.

While the Nash product rule is incentive-compatible in the sense that it is decomposable and incentivizes contribution, it still allows for other strategic behavior. In particular, agents may have an incentive to misrepresent their utility functions. Because the Nash product penalizes distributions in which some agents obtain very low utility, it can be beneficial for agents to pretend to like popular projects less, or even to mark them as unacceptable. This will make the Nash product rule worry that those agents will be underserved, and thus increase the funding of other projects acceptable to them. Unfortunately, by a result due to \citet[Thm.~2]{Hyll80a}, \emph{every} efficient mechanism will be vulnerable to misrepresentation of preferences, except for dictatorships. This impossibility is robust, and analogues hold even for dichotomous utilities \citep{BMS05a,Dudd15a,BBPS21a}. Since efficiency is our main objective, we ignore possible misrepresentation of preferences in our discussion.

Overall, our discussion suggests that the Nash product rule is a prime candidate for funding public projects through voluntary individual contributions. It combines efficiency with strong incentive properties, and as detailed in \Cref{sec:related}, it also satisfies important fairness and proportionality properties.
Finally, the rule is simple to define, can be easily approximated, and because it is decomposable, its distribution decisions can be easily understood by users. We are excited for the possibility of implementing a system based on the Nash product rule in the real world.

\section{Related Work}
\label{sec:related}

The classic literature on \emph{private provision of public goods} \citep[e.g.,][]{Samu54a,BBV86a} studies Nash equilibria in the non-cooperative setting where each agent decides how much to contribute to funding a public good. 
The main conclusion is that public goods will be underprovided in equilibrium, leading to inefficiency. 
In our model, we study cases where underprovision is less of a problem, for example because a company's matching program makes contributing a dominant strategy, or because the outside option is unattractive, such as paying more taxes. Similarly, in the context of donor coordination, agents may have set aside a part of their income as a budget for charitable activities.
The inefficiency that we are worried about is an inefficient allocation among different public goods. 

In contrast to the above literature, we study a setting where there is an explicit coordinating infrastructure or mechanism that aggregates preferences. Our model can thus be said to fall within the area of collective decision making where the set of alternatives is some subset of the Euclidian space, modeling divisible public goods or lotteries over indivisible public goods \citep[see, e.g.,][]{LeWe11a}.
Two concrete applications in this context that have recently gained a lot of attention are those of \emph{participatory budgeting} \citep[e.g.,][]{AzSh20a} and \emph{probabilistic social choice} \citep[e.g.,][]{Bran17a}. 

Participatory budgeting is a paradigm that allows citizens to collectively decide how a portion of a public budget ought to be spent \citep{Caba04a}. It has mostly been studied under the assumption that the budget is provided by an outside source (such as the city government).
In the most common model, projects come with a fixed cost, and they can either be fully funded or not at all. 
Probabilistic social choice studies the aggregation of individual preferences into a lottery over alternatives. Both settings are interrelated because a division of a fixed endowment among projects is equivalent to a probability distribution over alternatives. The social choice literature typically focusses on ordinal preferences. \citet{BMS05a} have initiated the study of probabilistic social choice for dichotomous utility functions, where cardinal and ordinal preferences coincide. 

The idea of maximizing the product of agents' utilities originates in the \emph{Nash bargaining solution} and the corresponding mechanism is therefore often referred to as the Nash product rule \citep{Nash50b}.\footnote{In the context of asset allocation, this rule is known as the Kelly criterion \citep{Kell56a}. When interpreting the utility vectors of the agents as a multidimensional random variable which takes the value of agent $i$'s utility vector with probability $C_i$, the Kelly criterion maximizes the same objective function as the Nash product rule.} The Nash product rule has recently become popular in various fields, including the allocation of indivisible private items \citep{CKM+19a}, committee elections \citep{LaSk18b}, and participatory budgeting \citep{FGM16a,FMS18a}. For all these settings, the Nash product rule satisfies strong fairness and proportionality properties.

In the context of dichotomous preferences, \citet{ABM17a} showed that the Nash product rule guarantees \emph{average fair share}: for any group of $\alpha\%$ of the agents which is \emph{cohesive} (there is a project that they all approve), for an average group member, the Nash product spends at least $\alpha\%$ of the endowment on approved projects. They also proved that the Nash product rule satisfies \emph{strict participation}. This property, which was introduced by \citet{BBH15b}, makes sense for the fixed-endowment setting, but it is relatively weak in our setting with variable contributions.\footnote{For dichotomous preferences, strict participation implies that if before contributing, $\beta\%$ of others' money was spent on $i$'s approved projects, then strictly more than $\beta\%$ of $i$'s additional contribution will be spent on $i$'s approved projects, while others' money is not spent in a worse way for $i$.
On the other hand, contribution incentive-compatibility ensures that if agent $i$ contributes money, \emph{all} of it will be spent on $i$'s approved projects, while again others' money is not spent in a worse way for $i$. 
\citeauthor[][p.~768,]{ABM17a} mention that a large class of additive welfarist rules satisfy strict participation. Out of these only the Nash product rule satisfies contribution incentive-compatibility.} 
Our main result showing that the Nash product rule is contribution incentive-compatible implies \citeauthor{ABM17a}'s \citeyearpar{ABM17a} result.

Two axiomatic characterizations of the Nash product rule are of particular interest in our context. 
First, \citet[Prop.~6]{BMS02a} have shown that the Nash product rule is the only rule that satisfies \emph{unanimous fair share} (a condition weaker than decomposability and significantly weaker than contribution incentive-compatibility) among rules that maximize a quantity of the form $\sum_{i\in N} C_i f(u_i(\delta))$ for some function $f$ \citep[see also][p.~768]{ABM17a}.
Their result was shown for the domain of dichotomous preferences, but easily extends to our more general domain due to the restricted form of mechanisms considered. 
Secondly, \citet{GuNe14a} have characterized a solution concept called the \emph{diversity value} for weighting different information sources based on their reliability.
Their result can be translated into a characterization of the Nash product rule for dichotomous preferences using conditions such as convexity, continuity, reinforcement, and a core condition that is again weaker than decomposability and significantly weaker than contribution incentive-compatibility. 

\citet{FGM16a} have initiated the study of a participatory budgeting setting where projects can receive an arbitrary amount of funding (like in our paper) but the budget is still exogenous and of fixed size. \citeauthor{FGM16a} argued that allocations in \emph{Lindahl equilibrium} \citep{Fole70a} are particularly desirable. The Lindahl equilibrium is a market equilibrium in an artificial market for public goods. In these markets, each agent faces personalized prices (usually interpreted as taxes) for the public goods, and in equilibrium each agent demands the same bundle of public goods. Under standard assumptions, \citet{Fole70a} showed that a Lindahl equilibrium exists (by reducing to the Arrow--Debreu private goods case), and is efficient. He also showed that equilibrium allocations are in the \emph{core}: no coalition of agents can afford (using only a fraction of the budget proportional to their size) an allocation that each coalition member prefers to the equilibrium. For the case of additive linear utilities, \citet{FGM16a} proved that the Nash product rule yields an allocation in Lindahl equilibrium, and hence is in the core.\footnote{This mirrors the canonical result that the Nash product yields an equilibrium in Fisher markets for private goods under additive valuations \citep{EiGa59a}.} The core can be interpreted as guaranteeing agents proportional representation: if a fraction of $\alpha\%$ of agents assign positive utility only to some set $A'$ of projects, then the Nash product rule will spend at least $\alpha\%$ of the budget on projects in $A'$. 

\citet{GuPe20a} study Lindahl equilibrium as a collective choice rule.
They characterize the set of all Lindahl equilibrium utility profiles as the outcomes of a bargaining solution they call the equitable solution.
Every outcome of the equitable solution can be justified by being the Nash bargaining outcome of a simple related bargaining problem.

The key difference between all of the above literature and our model is that in our model, the individual contributions to the pool are \emph{owned} by the agents. This suggests the definitions of the axioms of decomposability and  contribution incentive-compatibility, which---to the best of our knowledge---have not been considered in previous work.

Since we study a model of public goods provision with direct monetary contributions, one could assume quasilinear utilities and try to use the Vickrey-Clarke-Groves (VCG) mechanism. However, since the VCG mechanism implements the utilitarian rule, it will not incentivize contributions in our sense. It will also not be budget-balanced and generally run a deficit. Thus, the VCG mechanism does not seem useful for our purposes.

\section{Model and Axioms}
\label{sec:prelims}

Let $A$ be a finite set of $m$ \emph{projects} (e.g., charities or joint activities).
A \emph{distribution} $\delta$ is a function that describes how some amount $V$ 
is distributed among the projects, so $\delta: A \rightarrow \mathbb{R}_{\ge 0}$ with $\sum_{x\in A} \delta(x) = V$. 
For convenience, we write distributions as linear combinations of projects, so that $a + 2\,b$ denotes the distribution $\delta$ with $\delta(a)=1$ and $\delta(b)=2$.
The set of all distributions of value $V$ is denoted by $\Delta(V)$.

There is a finite set $N$ of $n$ \emph{agents}. 
Each agent $i\in N$ has a \emph{budget} $B_i\in \mathbb R_{>0}$ and a \emph{utility function} $u_i\colon A \rightarrow \mathbb{R}_{\geq 0}$, where $u_i(a) \ge 0$ is agent $i$'s utility for every unit of money that goes to project $a$.
So agent $i$'s utility for a distribution $\delta\in\Delta(V)$ is 
\[u_i(\delta)=\sum_{x\in A} \delta(x) \cdot u_i(x)\text.\]
A project is said to be \emph{acceptable} by an agent if it gives her positive utility, and \emph{unacceptable} it gives her utility $0$.
In the special case that an agent assigns the same utility to all projects, we label all projects as acceptable and set all utilities to $1$.
For convenience, we rescale utility functions such that the utility assigned to least-preferred acceptable projects is $1$, i.e., $\min\{u_i(x)\colon u_i(x) > 0\} = 1$.
(We explain in \Cref{fn:normalization}, which follows \Cref{def:contribution-ic}, how to adapt the model to work without this normalization.)
A utility function $u_i$ is \emph{dichotomous} if $u_i(x)\in\{0,1\}$ for all $x\in A$, so that agent $i$ only distinguishes between acceptable and unacceptable projects without discriminating between the acceptable ones. In this context, we refer to the set of acceptable projects of an agent as her \emph{approval set}.

Each agent chooses a non-negative \emph{contribution} $C_i\in[0,B_i]$ no larger than her budget that she contributes to a common pool.
A \emph{(contribution) profile} is a tuple of contributions $\prof = (C_i)_{i\in N}$.
Let $\profiles = \Pi_{i\in N}[0,B_i]$ denote the set of all profiles, and let $\profiles_{>0} = \Pi_{i\in N}(0,B_i]$ be the set of profiles where every agent has a positive contribution.
The sum of all agents' contributions in a profile is $\csum = \sum_{i\in N} C_i$ and is called the \emph{pool}.
A \emph{mechanism} $f$ maps a profile $\prof$ to a distribution of the pool $f(\prof) \in \Delta(\csum)$. 
Hence, we take the agents' budgets and utility functions to be fixed and known and consider the game induced by a mechanism that asks the agents for their contributions.  

We now discuss the main properties of distribution mechanisms that we are interested in: efficiency, decomposability, and  contribution incentive-compatibility.

A mechanism that yields high-quality distributions should, at minimum, satisfy Pareto efficiency. 
Indeed, if a mechanism produces a distribution so that we could redistribute the pool between projects and thereby increase the utility of every agent who contributes to the mechanism, then the mechanism has not made full use of the potential for mutual gains.
We intend mechanisms to ignore agents with zero contributions, and therefore define efficiency only with respect to agents with positive contributions. Thus, a Pareto improvement may be worse for an agent who has chosen not to contribute to the mechanism.

\begin{definition}[Efficiency]
Given a contribution profile $\prof \in \profiles$, a distribution $\delta'\in \Delta(\csum)$ \emph{dominates} another distribution $\delta\in\Delta(\csum)$ if $u_i(\delta')\ge u_i(\delta)$ for all $i \in N$ with $C_i > 0$ and $u_i(\delta')> u_i(\delta)$ for some $i \in N$ with $C_i > 0$.
A mechanism $f$ is \emph{efficient} if for every profile $\prof$, no distribution dominates $f(\prof)$.
\end{definition}

In applications, the mechanism might operate in a decentralized setting and not be able to directly control the use of the agents' contributions (for example, when a donor coordination service does not actually collect money from its participants). 
In such cases, the mechanism's output $\delta$ is better understood as a recommendation to the agents about how they should use their resources.
We would then need to decompose $\delta$ into individual distributions $\delta_i\in\Delta(C_i)$, so that if every agent spends her reported contribution according to $\delta_i$, we recover $\delta$.
A distribution is decomposable if $\delta_i$ spends agent $i$'s contribution exclusively on projects acceptable by $i$.

\begin{definition}[Decomposability]
\label{def:decomposable}
Let $\prof$ be a profile.
A distribution $\delta\in \Delta(\csum)$ is \emph{decomposable} if it can be divided into individual distributions $(\delta_i)_{i \in N}$ with $\delta_i \in \Delta(C_i)$ for all $i\in N$ and $\delta = \sum_{i\in N} \delta_i$ such that for all $i\in N$, we have $\delta_i(x) > 0$ only if $u_i(x) > 0$.
\end{definition}

We say that a mechanism $f$ is \emph{decomposable} if $f(\prof)$ is decomposable for all profiles $\prof$. 

In \Cref{sec:limits}, we discuss a strengthening of decomposability which requires that $\delta_i(x) > 0$ only if $i$ has assigned maximum utility to $x$, i.e., only if $u_i(x) \ge u_i(y)$ for all $y \in A$. However, this requirement turns out to be too strong; it clashes with efficiency.
Alternative characterizations of decomposability and strong decomposability are given in \Cref{app:decomp}.

We want to incentivize agents to contribute their entire budget since this increases the potential gains from coordination.
Suppose each agent $i$ aims to maximize $u_i(f(\prof)) - C_i$, i.e., her utility for the distribution of the pool minus her own contribution.
This objective is well-motivated if agent $i$ could spend money outside the mechanism so as to obtain one unit of utility per unit of money. Given our normalization of utility functions, this is equivalent to agent $i$ valuing one unit of money as much as one unit of money going to a least-preferred acceptable project.
A mechanism then incentivizes agent $i$ to contribute her entire budget if choosing $C_i = B_i$ is a weakly dominant strategy for agent $i$.
If this property holds independently of the agents' budgets, it is equivalent to $u_i(f(\prof)) - C_i$ being weakly increasing in $C_i$.
We call such a mechanism contribution incentive-compatible. 

\begin{definition}[Contribution incentive-compatibility]
\label{def:contribution-ic}
A mechanism $f$ is \emph{contribution incentive-compatible} if for each $i\in N$ and all profiles $C$, we have
\[
u_i(f(C_{-i}, C'_i)) - C'_i \le u_i(f(C_{-i}, C_i)) - C_i \quad\text{for all $C'_i$ with $0 \le C_i' \le C_i$.}\footnote{\label{fn:normalization}We normalized utility functions so that $\min\{u_i(x)\colon u_i(x) > 0\} = 1$. Without this normalization, the definition of contribution incentive-compatibility would read $u_i(f(C_{-i}, C'_i)) - C'_i\min\{u_i(x)\colon u_i(x) > 0\} \le u_i(f(C_{-i}, C_i)) - C_i\min\{u_i(x)\colon u_i(x) > 0\}$.}
\]
\end{definition}
In particular, not participating ($C_i' = 0$) is at least weakly dominated by contributing any positive amount of one's own budget.
We can re-write the definition as
\[
u_i(f(C_{-i}, C_i)) - u_i(f(C_{-i}, C_i - \epsilon)) \ge \epsilon
\]
for all $0\le\epsilon\le C_i$.
Thus, increasing one's contribution by $\epsilon$ causes an increase of at least $\epsilon$ in the utility derived from the distribution selected by $f$.

In \Cref{sec:limits}, we discuss a strengthening of contribution incentive-compatibility that requires $u_i(f(\prof)) - C_i\cdot\max_{y\in A}u_i(y)$ to be weakly increasing in $C_i$.
This corresponds to the assumption that an agent values one unit of money as much as one unit of money going to her highest utility project.
Again, this stronger version is incompatible with efficiency.

Decomposability and contribution incentive-compatibility are logically independent properties, even when utilities are dichotomous. 
\Cref{app:contributionnotdecomposable} gives two mechanisms that satisfy only one of these axioms at a time. Nevertheless, the two properties seem to be related as together with efficiency, contribution incentive-compatibility is likely to imply decomposability since the Nash product rule always returns a decomposable distribution.

\section{The Nash Product Rule}\label{sec:nash}

The \emph{Nash product}, which refers to the product of agent utilities, is often seen as a compromise between utilitarian and egalitarian welfare \citep[][]{Moul88a}. Maximizing the Nash product has been found to yield fair and proportional outcomes in many preference aggregation settings, and it also turns out to be attractive in our context. Formally, 
\[
\nash(\prof) = \argmax_{\delta \in \Delta(\csum)} \prod_{i\in N} \left( u_i(\delta)\right)^{C_i} = \argmax_{\delta \in \Delta(\csum)} \sum_{i\in N} C_i \log\left( u_i(\delta) \right)\text.\footnote{\nash is invariant to rescaling utility functions. Hence, the normalization $\min\{u_i(x)\colon u_i(x) > 0\} = 1$ does not affect \nash.}
\]
Note that $\nash$ weights agents by their contribution. (As a convention, we let $0^0 = 1$ and $0 \log 0 = 0$, so that $\nash$ ignores agents with zero contribution.) An unweighted Nash rule where each agent gets assigned the same weight would violate decomposability and contribution incentive-compatibility. Indeed, that mechanism would not take into account the individual contributions at all, and thus agents with large contributions would have the same influence as agents with very small (or even zero) contributions. This shows the need to weight agents.

There can be several distributions that maximize the Nash product.\footnote{Consider the following example (which notably does not contain any `clone' projects). There are four agents with approval sets $\{a,c\}$, $\{a,d\}$, $\{b,c\}$, and $\{b,d\}$ and each agent contributes 1. Then, the set of \nash distributions consists of all convex combinations of $2a + 2b$ and $2c + 2d$.} 
However, all of these distributions yield the same amount of utility to each agent (due to the strict convexity of the objective function, see \Cref{lem:nashmax}). Thus, we can arbitrarily break ties in these cases without affecting any of the axioms considered here.

We now show that $\nash$ is efficient, decomposable, and incentivizes contribution.
The first of these is easy: The distribution $\nash(\prof)$ maximizes a sum of functions, namely $C_i\log(\cdot)$, that are strictly increasing in the agents' utilities provided that $C_i > 0$. Thus, $\nash$ is efficient \citep[see, e.g.,][]{Moul88a}. We will prove that $\nash$ satisfies the other two axioms later in this section. First, we verify these claims for a small example.

\begin{example}
\begin{table}[htb]
	\centering

	\[
	\begin{array}{rcc@{\hskip 3em}c}
		\toprule
		& u_i(a) & u_i(b) & C_i\\\midrule
		\text{Agent 1} & 1 & 0 & 1\\
		\text{Agent 2} & 1 & 3 & 1\\
		\bottomrule
	\end{array}
	\]

	\caption[Type profile]{Profile $\prof = (1,1)$ with $B_i=C_i$ for $i \in \{1,2\}$ and $\nash(\prof) = 1.5\ a + 0.5\ b$.}
	\label{tab:nashexample}
\end{table}

A simple example of \nash for a profile $\prof$ with two agents and two projects is shown in \Cref{tab:nashexample}.
We have \[\delta = \nash(\prof) = \argmax_{\delta \in \Delta(2)} \delta(a) \cdot (\delta(a) + 3\ \delta(b)) = 1.5\ a + 0.5\ b\text.\]
The collective distribution $\delta$ can be decomposed into individual distributions 
\[
\delta_1=a \quad\text{and}\quad \delta_2=0.5\ (a + b)\text.
\] 
Contribution incentive-compatibility is satisfied in this example because 
\begin{eqnarray*}
u_1(\nash((1-\epsilon_1,1)))+\epsilon_1 &=& 1.5-0.5\epsilon_1 \text{ and}\\ 
u_2(\nash((1,1-\epsilon_2)))+\epsilon_2 &=& 6-2\epsilon_2-2\min\left\{1.5,2-\epsilon_2 \right\}\text.
\end{eqnarray*}
are (weakly) decreasing for increasing $\epsilon_1=B_1-C_1$ and $\epsilon_2=B_2-C_2$, respectively.
On the other hand, simply maximizing the sum of individual utilities in this example would result in $\delta' = 2\ b$, which is not decomposable, as project $b$ is unacceptable for agent 1, and violates contribution incentive-compatibility because agent 1 would prefer an outside option to participating in the mechanism. \qed
\end{example}

\subsection{Decomposability}
The Nash product distribution is the solution of an optimization problem, and thus satisfies the first-order conditions of optimality.
By manipulating these conditions, we can show that the Nash product distribution is always decomposable.\footnote{This proof is similar to a result by \citet{GuNe14a} who consider \nash with dichotomous preferences, and establish an equivalent property in this restricted setting.}

\begin{theorem}\label{thm:nashimp}
\nash is decomposable.
\end{theorem}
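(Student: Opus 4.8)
The plan is to exploit the first-order optimality conditions of the concave program defining $\nash$. If $\csum = 0$ the claim is vacuous, so I would assume $\csum > 0$ and fix a maximizer $\delta = \nash(\prof)$ of the map $\delta \mapsto \sum_{i\in N} C_i \log u_i(\delta)$ over the scaled simplex $\Delta(\csum) = \{\delta \geq 0 : \sum_{x\in A}\delta(x) = \csum\}$. Since every agent has at least one acceptable project, one can exhibit a feasible distribution with $u_i(\delta)>0$ for every $i$ with $C_i>0$ (spread a little mass onto one acceptable project of each such agent), so the supremum is finite; consequently the maximizer $\delta$ itself must satisfy $u_i(\delta)>0$ for all $i$ with $C_i>0$, as otherwise the objective would be $-\infty$. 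This is what makes the expressions below well-defined.

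Next I would invoke the KKT conditions for this maximization: there is a multiplier $\lambda\in\mathbb{R}$ such that the partial derivative of the objective with respect to $\delta(x)$, namely $\sum_{i\in N} C_i\, u_i(x)/u_i(\delta)$, equals $\lambda$ on the support $\{x : \delta(x)>0\}$ (and is $\le\lambda$ off the support, which will not be needed). The crux is to pin down $\lambda$: multiplying the equality $\sum_{i} C_i\, u_i(x)/u_i(\delta) = \lambda$ by $\delta(x)$ and summing over $x$ in the support, the left-hand side becomes $\sum_{i}\bigl(C_i/u_i(\delta)\bigr)\sum_x \delta(x)u_i(x) = \sum_i \bigl(C_i/u_i(\delta)\bigr)u_i(\delta) = \sum_i C_i = \csum$, while the right-hand side is $\lambda\sum_x\delta(x) = \lambda\csum$. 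Hence $\lambda = 1$, giving the identity
\[
\sum_{i\in N} C_i\,\frac{u_i(x)}{u_i(\delta)} = 1 \qquad\text{for every $x$ with $\delta(x)>0$.}
\]

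With this identity the decomposition essentially writes itself. For $i$ with $C_i>0$ I would set $\delta_i(x) := \delta(x)\cdot C_i\, u_i(x)/u_i(\delta)$, and for $i$ with $C_i=0$ let $\delta_i\equiv 0$. Then $\delta_i(x)\ge 0$ always, and $\delta_i(x)>0$ forces $u_i(x)>0$, so the support requirement of \Cref{def:decomposable} holds. Each $\delta_i$ distributes exactly $C_i$, since $\sum_x\delta_i(x) = \bigl(C_i/u_i(\delta)\bigr)\sum_x\delta(x)u_i(x) = C_i$. Finally $\sum_{i}\delta_i = \delta$: for $x$ in the support, $\sum_i\delta_i(x) = \delta(x)\sum_i C_i u_i(x)/u_i(\delta) = \delta(x)$ by the displayed identity, and for $x$ outside the support both sides vanish. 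Hence $\delta$ is decomposable.

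The main obstacle is the careful handling of the first-order conditions: arguing that the maximizer lies in the region where every $u_i(\delta)>0$ (so the log-objective is differentiable there and KKT applies), and the short computation yielding $\lambda=1$ — which is precisely the point at which weighting agents by their contributions $C_i$ is what makes the candidate pieces $\delta_i$ have the correct total masses. Everything else is routine verification.
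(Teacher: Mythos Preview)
Your proposal is correct and follows essentially the same approach as the paper: derive the KKT conditions, show $\lambda=1$ via the sum-over-support computation, and define $\delta_i(x)=C_i\,\delta(x)\,u_i(x)/u_i(\delta)$. You are arguably a bit more careful than the paper in explicitly justifying $u_i(\delta)>0$ for contributing agents and in handling the $C_i=0$ case, but the core argument is identical.
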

\begin{proof}
	We have to show that there is a decomposition of $\nash(\prof)$ into $\delta_i\in\Delta(C_i)$, $i\in N$, such that $\sum_{i\in N}\delta_i(x) = \delta(x)$ for all $x$.
	
We consider the Karush--Kuhn--Tucker (KKT) conditions
and write the Lagrangian as
\[
\mathcal L(\delta, \lambda, \mu_1, \dots, \mu_m) = \sum_{i\in N} C_i \log\left( u_i(\delta) \right) + \lambda \left(\csum - \sum_{x\in A} \delta(x)\right) + \sum_{x \in A} \mu_x \delta(x) \text,
\]
where $\lambda \in \mathbb R$ is the Lagrange multiplier for the constraint $\sum_{x\in A} \delta(x) = \csum$ and $\mu_x \ge 0$ is the multiplier for the constraint $\delta(x) \ge 0$.

Suppose $\delta$ is an optimal solution. By complementary slackness, we must have $\mu_x = 0$ whenever $\delta(x) > 0$. Also, we must have $\partial \mathcal L / \partial \delta(x) = 0$, that is, $\sum_{i\in N} C_iu_i(x)/u_i(\delta) - \lambda + \mu_x = 0$. 
By case distinction based on whether $\delta(x) > 0$, it follows that $\lambda\delta(x) = \sum_{i\in N} C_i\delta(x)u_i(x)/u_i(\delta)$ for all $x\in A$. Hence,
\[ \lambda \cdot \csum = \sum_{x\in A} \lambda\delta(x) = \sum_{x\in A} \sum_{i\in N} C_i \frac{\delta(x)u_i(x)}{u_i(\delta)} =  \sum_{i\in N} C_i \frac{u_i(\delta)}{u_i(\delta)} = \sum_{i\in N} C_i = \csum\text.  \]
So $\lambda = 1$, and hence $\sum_{i\in N} C_iu_i(x)/u_i(\delta) = 1$ for all $x\in A$ such that $\delta(x) > 0$.

Now, for each $i\in N$, define an individual distribution $\delta_i \in \Delta(C_i)$ with $\delta_i(x) = C_i\delta(x)u_i(x)/u_i(\delta)$ for all $x\in A$. 
Clearly, $\supp(\delta_i) \subseteq\{a\in A\colon u_i(a) > 0\}$ and $\delta_i\in\Delta(C_i)$, since $\sum_{x\in A} \delta(x)u_i(x) = u_i(\delta)$. 
To see that $\delta = \sum_{i \in N} \delta_i$, note that for $x\in A$ with $\delta(x) = 0$ we have $\delta_i(x) = 0$ for all $i\in N$, and for $x\in A$ with $\delta(x) > 0$, we have
\[
\sum_{i \in N} \delta_i(x) = \sum_{i \in N} C_i\delta(x) \frac{u_i(x)}{u_i(\delta)} = \delta(x) \sum_{i \in N} C_i \frac{u_i(x)}{u_i(\delta)} = \delta(x)\text. \qedhere
\]

\end{proof}

By inspecting the proof, we see that the distribution $\delta_i$ of agent $i$ satisfies a stronger notion of decomposability: the fraction of her contribution that she gives to project $x$ is proportional to the utility $\delta(x)u_i(x)$ she derives from $x$ in the Nash product distribution $\delta$ \citep[see also][]{GuNe14a}.
For example, if half of agent $i$'s utility $u_i(\delta)$ is due to the amount $\delta(x)$ spent on $x$, then she transfers half of her contribution to $x$.
Thus, it suffices that a central clearinghouse announces the overall distribution $\delta$. Agents can then easily compute their individual distributions $\delta_i$ without needing to know the other agents' utility functions or contributions.

\subsection{Computation}
\label{sec:computation}

In general, \nash can be computed to arbitrary precision using convex programming \citep[see, e.g.,][]{BMS05a}.
However, \nash cannot be computed exactly (in the standard binary representation) because it may return distributions with irrational values. An example is given in \Cref{tab:nashexampleirr}. 
\begin{table}[htb]
	\centering

	\[
	\begin{array}{rccc@{\hskip 3em}c}
		\toprule
		& u_i(a) & u_i(b) & u_i(c) & C_i\\\midrule
		\text{Agent 1} & 1 & 1 & 0 & 1\\
		\text{Agent 2} & 1 & 0 & 1 & 1\\
		\text{Agent 3} & 0 & 1 & 1 & 1\\
		\text{Agent 4} & 0 & 0 & 1 & 1\\
		\bottomrule
	\end{array}
	\]

	\caption[Type profile]{Profile $\prof = (1,1,1,1)$ with approval sets $\{ab\},\{ac\},\{bc\},\{c\}$. Let $\delta = \nash(C)$. Alternatives $a$ and $b$ are symmetric, so $\delta(a) = \delta(b)$. Thus $\delta(c) = 4 - 2\delta(a)$. So we can write the Nash objective as $2\delta(a)(4-\delta(a))^2(4-2\delta(a))$, which is maximized for $\delta(a) = (7 - \sqrt{17})/4$.}
	\label{tab:nashexampleirr}
\end{table}

We observed after the proof of \Cref{thm:nashimp} that the distribution selected by \nash is a fixed point of a process where agents spend their contribution on a project in proportion to the utility they receive from that project under the \nash distribution.
This observation, due to \citet{GuNe14a}, gives rise to a simple, dynamic procedure for approximating \nash, similar to the \emph{proportional response dynamic} that converges to equilibrium in Fisher markets for private goods \citep{Zhan11a}. 

For $C\in\profiles_{>0}$, consider the mapping $f\colon\Delta(\csum)\rightarrow\Delta(\csum)$ defined by
\begin{equation*}
	(f(\delta))(x) = \sum_{i\in N} C_i \frac{u_i(x)}{u_i(\delta)} \delta(x) \qquad \text{for all $\delta \in \Delta(\csum)$.}\footnote{Note that $f$ is well-defined only if $u_i(\delta) > 0$ for all $i\in N$. This will always hold in our analysis.}
\end{equation*}
The $i$th summand is called the individual distribution of agent $i$.
Hence, given a distribution $\delta$, the fraction of the contribution agent $i$ assigns to project $x$ in $f(\delta)$ equals the fraction of the utility agent $i$ derives from the overall contribution $\delta(x)$ to $x$.
The proof of \Cref{thm:nashimp} shows that we have $f(\delta) = \delta$ for $\delta = \nash(\prof)$.
The mapping $f$ induces a dynamic procedure: For any initial distribution $\delta^{0}$, we obtain a sequence $(\delta^k)_{k\in\mathbb N}$ by setting $\delta^k = f(\delta^{k-1}) $ for each $k \ge 1$.

It turns out that this dynamic procedure has been studied in the literature on optimal portfolios, where projects correspond to stocks and utilities encode stock performance.%
\footnote{That literature has argued that a portfolio of stocks maximizing expected log returns (which corresponds to the Nash product) produces optimal earnings in the long run \citep[Chapter~16]{CoTh06a}. The formal analysis focusses on stock returns over time and thus does not seem relevant to the study of \nash as an aggregation rule.} 
In this context, \citet{Cove84a} showed that the Nash product of $\delta^k$ converges to the optimum Nash product if $\delta^0$ has full support, and the sequence $(\delta^k)_{k\in\mathbb N}$ converges to a Nash distribution under additional assumptions. Thus, by simply computing terms of the sequence $(\delta^k)_{k\in\mathbb N}$, one can approximate a Nash distribution without resorting to convex programming. 

For convenience, we give a compact proof of this result.  It is based on Cover's proof, which features a clever use of Jensen's inequality. Our proof is adapted to our setting and is more compact since our model assumes the number of agents to be finite. We emphasize that whenever the Nash distribution is unique (which it is for a generic profile), the sequence $(\delta^k)_{k\in\mathbb N}$ converges to it. 

We write $F(\delta) = \sum_{i\in N}\cnom_i \log(u_i(\delta))$ for the (log) Nash product of $\delta \in \Delta(\csum)$.

\begin{theorem}\label{thm:dynamicNash}
	Let $C\in\profiles_{>0}$ and $\delta^0\in\Delta(\csum)$ be a distribution with full support.
	Denote by $(\delta^k)_{k\in\mathbb N}$ its induced sequence.
	Then,	$(F(\delta^k))_{k\in\mathbb N}$ converges to the optimum Nash product. 
	If the Nash distribution is unique, $(\delta^k)_{k\in\mathbb N}$ 	    converges to \nash. 
\end{theorem}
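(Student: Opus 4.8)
Throughout, write $\delta^\ast = \nash(\prof)$ for a distribution maximizing $F$ over $\Delta(\csum)$; such a maximizer exists because $F$ is continuous as a map $\Delta(\csum)\to[-\infty,\infty)$ on the compact set $\Delta(\csum)$, and its maximum value is finite (the uniform distribution over the projects acceptable to at least one agent already gives every agent positive utility, since $C\in\profiles_{>0}$). The plan is to track the Kullback--Leibler-type quantity
\[
\Phi_k := \sum_{x\in A}\delta^\ast(x)\log\frac{\delta^\ast(x)}{\delta^k(x)},
\]
and to prove that it is non-increasing, with $\Phi_k-\Phi_{k+1}\ge F(\delta^\ast)-F(\delta^k)\ge 0$. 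Granting this, $\Phi_k$ is non-increasing and non-negative (it equals $\csum$ times an ordinary relative entropy, since $\delta^\ast$ and $\delta^k$ have the same total mass $\csum$), hence converges; therefore $\Phi_k-\Phi_{k+1}\to 0$, which forces $F(\delta^k)\to F(\delta^\ast)$, the first claim. For the second claim, $(\delta^k)_{k\in\mathbb N}$ lies in the compact set $\Delta(\csum)$ and $F$ is continuous, so every subsequential limit $\bar\delta$ satisfies $F(\bar\delta)=F(\delta^\ast)$ and thus maximizes $F$; if $\nash(\prof)$ is unique, then $\bar\delta=\nash(\prof)$, so the whole sequence converges to $\nash(\prof)$.

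Two routine facts are needed first. (i) $\Phi_k$ is well-defined and finite, and the dynamic is well-defined, for all $k$: since $\delta^0$ has full support and $(f(\delta))(x)=\delta(x)\sum_{i\in N}C_iu_i(x)/u_i(\delta)$ multiplies $\delta(x)$ by a strictly positive factor whenever $x$ is acceptable to some agent, a short induction shows $u_i(\delta^k)>0$ for every $i$ and $k$, and $\delta^k(x)>0$ for every $x$ acceptable to some agent; moreover $\supp(\delta^\ast)$ contains only such projects (moving mass off a universally unacceptable project onto an acceptable one strictly increases $F$). (ii) The first-order (KKT) conditions derived in the proof of \Cref{thm:nashimp} give $\sum_{i\in N}C_i\,u_i(x)/u_i(\delta^\ast)=1$ for every $x\in\supp(\delta^\ast)$.

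The heart of the argument, which I expect to be the main obstacle, is the inequality $\Phi_k-\Phi_{k+1}\ge F(\delta^\ast)-F(\delta^k)$; it uses Jensen's inequality exactly once. Since $f$ preserves the total mass $\csum$, we have $\delta^{k+1}(x)/\delta^k(x)=\sum_{i\in N}C_iu_i(x)/u_i(\delta^k)=:m_x$, so telescoping yields $\Phi_k-\Phi_{k+1}=\sum_{x\in A}\delta^\ast(x)\log m_x$. Fix $x\in\supp(\delta^\ast)$ and set $\ell_i:=u_i(\delta^\ast)/u_i(\delta^k)\in(0,\infty)$. Writing $m_x=\sum_{i\in N}\bigl(C_iu_i(x)/u_i(\delta^\ast)\bigr)\,\ell_i$ and invoking fact (ii), the coefficients $C_iu_i(x)/u_i(\delta^\ast)$ are non-negative and sum to $1$ over $i$, so $m_x$ is a weighted average of the $\ell_i$; by concavity of $\log$,
\[
\log m_x\;\ge\;\sum_{i\in N}C_i\,\frac{u_i(x)}{u_i(\delta^\ast)}\,\log\ell_i .
\]
Multiplying by $\delta^\ast(x)$, summing over $x\in\supp(\delta^\ast)$, and interchanging the two sums gives
$\sum_{x}\delta^\ast(x)\log m_x\ge\sum_{i\in N}\frac{C_i\log\ell_i}{u_i(\delta^\ast)}\sum_{x}\delta^\ast(x)u_i(x)=\sum_{i\in N}C_i\log\ell_i=F(\delta^\ast)-F(\delta^k)$, which is non-negative since $\delta^\ast$ maximizes $F$. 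Combined with the telescoping identity, this establishes the estimate, and the remainder of the proof proceeds as sketched above; the only subtleties beyond this computation are the bookkeeping around universally unacceptable projects and checking that all quantities are finite, both dealt with by the routine facts (i) and (ii).
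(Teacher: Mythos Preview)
Your argument is correct, and it follows a genuinely different route from the paper's proof. The paper shows that $F(\delta^{k+1})-F(\delta^k)\ge\tfrac{1}{2\log 2}\|\delta^{k+1}-\delta^k\|_1^2$ by applying Jensen's inequality to the transition from $\delta^k$ to $\delta^{k+1}$; this yields monotonicity of $F(\delta^k)$ and that the step size tends to zero, but it does \emph{not} immediately identify the limit of $F(\delta^k)$ as the optimum. The paper then needs a second, more delicate step: it argues that every accumulation point of $(\delta^k)$ is a fixed point of the dynamic, verifies the KKT conditions there, and handles boundary accumulation points via a connectedness argument on the set of limit points. By contrast, you take the Kullback--Leibler divergence $\Phi_k$ from a fixed optimizer $\delta^\ast$ as a Lyapunov function and apply Jensen with the \emph{optimal} KKT weights $C_iu_i(x)/u_i(\delta^\ast)$; this directly yields $\Phi_k-\Phi_{k+1}\ge F(\delta^\ast)-F(\delta^k)\ge 0$, so convergence of $\Phi_k$ forces $F(\delta^k)\to F(\delta^\ast)$ in one stroke. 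Your approach is shorter and sidesteps the accumulation-point analysis entirely; the paper's approach, on the other hand, gives the extra information that $F(\delta^k)$ is monotone and provides a quantitative Pinsker-type lower bound on the per-step increase in terms of the step size $\|\delta^{k+1}-\delta^k\|_1$, which could be useful for rate estimates.
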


\begin{proof}
	 Note that if $\delta$ has full support, then $u_i(\delta) > 0$ for all $i\in N$.
	 Moreover, in the next iterate $f(\delta)$, every agent assigns her contribution only to projects for which she has strictly positive utility.  
  Hence, $u_i(\delta^k)>0$ for all $i$ and $k$, and $\delta^k(x)=0$ for a project $x$ implies $u_i(x)=0$ for all agents $i$.
  We can thus ignore such projects and assume $\delta^k(x)>0$ for all $x$ and $k$.
	 Normalizing by dividing by $\csum$ if necessary, we may assume that $\csum = 1$, so that $\delta^k \in \Delta(1)$ for all $k$.
	 
    The proof proceeds in two steps.
    \begin{enumerate}
        \item The sequence $(F(\delta^k))_{k\in \mathbb{N}}$ converges.
        \item Every accumulation point of $(\delta^k)_{k\in \mathbb{N}}$ is a Nash product distribution. 
    \end{enumerate}
	\begin{step}\label{step:nashdyn1}
		For $k\geq 1$, we get
			\begin{align*}
			    F(\delta^{k+1}) - F(\delta^{k}) &= \sum_{i\in N} \cnom_i\log\left(\frac{u_i(\delta^{k+1})}{u_i(\delta^{k})}\right) =\sum_{i\in N}\cnom_i\log\left(\sum_{x\in A} \delta^{k+1}(x) \frac{u_i(x)}{u_i(\delta^{k})}\right) \\
			    &\overset{(1)}{=}\sum_{i \in N}\cnom_i\log\left(\sum_{x \in A}\left(\sum_{j \in N}\cnom_j\frac{u_j(x)}{u_j(\delta^k)}\right)\delta^k(x)\frac{u_i(x)}{u_i(\delta^k)}\right) \\
			    &\overset{(2)}{\geq} \sum_{i \in N}\cnom_i\sum_{x \in A}\delta^{k}(x)\frac{u_i(x)}{u_i(\delta^k)}\log\left(\sum_{j \in N}\cnom_j\frac{u_j(x)}{u_j(\delta^k)}\right) \\
			    &\overset{(3)}{=}\sum_{x \in A}\delta^k(x)\sum_{i \in N}\cnom_i\frac{u_i(x)}{u_i(\delta^k)}\log\left(\sum_{j \in N}\cnom_j\frac{u_j(x)}{u_j(\delta^k)}\frac{\delta^k(x)}{\delta^k(x)}\right) \\
			    &\overset{(4)}{=}\sum_{x \in A}\delta^{k+1}\log\left(\frac{\delta^{k+1}}{\delta^k}\right)\overset{(5)}{\geq}
			     \frac{1}{2\log(2)}\norm{\delta^{k+1}-\delta^k}_1^2 \geq 
			     0,
			\end{align*}
		    where $(1)$ and $(4)$ follow from the definition of the dynamic procedure,
		    $(2)$ is an application of Jensen's inequality for concave functions (notice that $\sum_{x \in A}\delta^{k}(x)\frac{u_i(x)}{u_i(\delta^k)}=1$),
		    $(3)$ changes the summation order,
		    and $(5)$ uses Lemma 11.6.1 of \citet{CoTh06a}, where the left-hand side is the Kullback-Leibler divergence of $\delta^{k+1}$ and $\delta^k$.   
		    
			 Hence, $(F(\delta^k))_{k\in \mathbb{N}}$ is a weakly increasing sequence.		  
			  As it is bounded from above by $F(\delta^*)$ where $\delta^*$ is a Nash product distribution, it converges.
	\end{step}
	\begin{step}
		The KKT-conditions for this concave optimization problem are sufficient, i.e. every $\delta^* \in \Delta(1)$ that satisfies them is a Nash product distribution. As shown in the proof of \Cref{thm:nashimp}, the KKT-conditions are given for every $x\in A$ with $\mu_x \geq 0$ by
	    \begin{align*}
	        \sum_{i \in N}\cnom_i \frac{u_i(x)}{u_i(\delta^*)}+\mu_x=1 \quad\text{and}\quad \left[\delta^*(x)>0  \text{ implies } \mu_x=0\right]\text.
	    \end{align*}
	    Assume that the dynamic procedure \emph{terminates}, i.e., for some $k$, $\delta^{k}(x)=\delta^{k+1}(x)=\delta^{k}(x) \sum_{ i \in N} \cnom_i\frac{u_i(x)} {u_i(\delta^{k})}$ for all $x \in A$. 
		 Recalling that $\delta^{k}(x) > 0$ for all projects $x\in A$ and $k\in \mathbb{N}$, $\delta^k$ satisfies the KKT-conditions and is a Nash product distribution.
	    
		In all other cases, let $\delta'$ be an accumulation point of $(\delta^k)_{k\in\mathbb N}$ and $(\delta^{k_l})_{l\in\mathbb N}$ be a subsequence converging to it.
		We show that $\delta'$ is a fixed-point of $f$. 
		The sequence $(F(f(\delta^{k_l}))-F(\delta^{k_l}))_{l\in\mathbb N}$ converges to $0$ by \Cref{step:nashdyn1}. 
		Continuity of $F$ implies $0=F(f(\delta^{'}))-F(\delta')\geq \frac{1}{2\log(2)}\norm{f(\delta')-\delta'}_1^2$,
		and so $f(\delta')=\delta'$.
		Therefore, $\delta'(x)=\delta'(x) \sum_{ i \in N} \cnom_i\frac{u_i(x)} {u_i(\delta')}$, which shows that $\delta'$ satisfies the KKT-conditions for all $x$ with $\delta'(x)>0$.
	    \\Denote by $S$ the set of all accumulation points. $S$ is connected as the step size of the dynamics converges to $0$ by \Cref{step:nashdyn1}. As $(F(\delta^k))_{k\in\mathbb N}$ converges, $F(\delta')=F(\delta^{''})$ for any two $\delta', \delta^{''} \in S$. If there exists a $\delta' \in S$ that has full support, then $\delta'$ and consequently, all accumulation points are Nash distributions as $(F(\delta^k))_{k\in\mathbb N}$ is increasing. 
	    \\In the remaining cases, every accumulation $\delta'$ point is located in a face $T_{\delta'}=\{\delta \in \Delta(1):\delta'(x)=0 \Rightarrow \delta(x)=0\}$ of $\Delta(1)$ and maximizes $F$ on this face by the fact that $\delta'$ has full support in $T_{\delta'}$. Therefore, $u_i(\delta')=u_i(\delta{''})$ for all $i \in N$ and $\delta', \delta^{''} \in T_{\delta'}$ and even for general $\delta', \delta^{''} \in S$ by connectivity of $S$. 
	    \\Assume now that there exist $\delta' \in S$ and $x \in A$ with $\delta'(x)=0$ but $\sum_{ i \in N} \cnom_i\frac{u_i(x)} {u_i(\delta')}>1$. This implies $\lim_{k \to \infty}\sum_{ i \in N} \cnom_i\frac{u_i(x)} {u_i(\delta^k)}>1$ which contradicts $\delta'(x)=0$.
	    
	\end{step}\setcounter{step}{0}
	
	\medskip
	Combining both steps, we conclude that every accumulation point of $(\delta^k)_{k\in\mathbb N}$ is a Nash product distribution and $(F(\delta^k))_{k\in\mathbb N}$ converges to the optimum Nash product as it is weakly increasing. 
	If the Nash product distribution is unique, $(\delta^k)_{k\in\mathbb N}$ thus has a unique accumulation point and converges (to the Nash product distribution).
\end{proof}
We mention some additional properties of this dynamic procedure. 
First, as noted by \citet{Cove84a}, one can bound the approximation error via $F(\delta^*)-F(\delta^k)\leq \max_{x \in A}\log\left(\sum_{ i \in N} \cnom_i\frac{u_i(x)} {u_i(\delta^k)}\right)$. Second, every distribution $\delta^k$ appearing in the sequence (apart from $\delta_0$) is decomposable, which is important when stopping after a finite number of steps. 
Finally, the procedure also converges to a Nash distribution in some cases where it is not unique. Suppose there are two `clone' projects $x$ and $y$ (such that all agents are indifferent between $x$ and $y$) but that the Nash distribution is unique if we were to merge these projects. Notice that if we start the dynamic procedure with the uniform distribution over all projects, then we have $\delta^k(x) = \delta^k(y)$ at each step $k$, which implies that the dynamic procedure does converge to a Nash distribution. 

\subsection{Contribution Incentive-Compatibility}
We now turn to our main result that \nash is contribution incentive-compatible.
The proof is technical and requires a number of lemmas, which are stated and proved in the appendix. 
At a high level, we estimate the rate of change of an agent's utility as her contribution increases, and integrate this quantity as she goes from not participating to participating in the mechanism to obtain the desired result. 
We are not aware of a simpler proof using the first-order conditions.
Attempts to prove \Cref{thm:nashir} by differentiating the first-order conditions with respect to $C_i$ (as in the proof of \Cref{thm:nashimp}) were unsuccessful.

\begin{restatable}{theorem}{nashir}\label{thm:nashir}
\nash is contribution incentive-compatible.
\end{restatable}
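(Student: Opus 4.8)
The plan is to analyze the trajectory $C_i \mapsto \nash(C_{-i}, C_i)$ as agent $i$'s contribution varies, and show that agent $i$'s utility along this trajectory grows at rate at least $1$ (relative to the normalization $\min\{u_i(x) : u_i(x)>0\}=1$). Since contribution incentive-compatibility can be rewritten as requiring $u_i(f(C_{-i},C_i)) - u_i(f(C_{-i},C_i-\epsilon)) \ge \epsilon$, it suffices to prove that $\frac{d}{dC_i} u_i(\nash(C_{-i}, C_i)) \ge 1$ wherever the derivative exists, and then integrate. By the decomposability analysis in the proof of \Cref{thm:nashimp}, we already know that along the Nash trajectory, agent $i$'s own individual distribution $\delta_i$ spends its full contribution $C_i$ only on projects $x$ with $u_i(x)>0$, and that $\delta_i(x) = C_i \delta(x) u_i(x)/u_i(\delta)$; this will be the source of the ``$+1$'' in the derivative bound.

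The concrete steps I would carry out are as follows. First, I would establish that the maximizer $\delta = \nash(C_{-i}, C_i)$ varies in a sufficiently regular (e.g.\ piecewise-smooth, or at least Lipschitz) way with $C_i$, so that $u_i(\delta)$ is absolutely continuous in $C_i$ and the fundamental theorem of calculus applies; this is where I expect to need a careful argument about the support of the optimal distribution being locally constant outside a measure-zero set of $C_i$, using strict concavity of the log-Nash objective to pin down utilities uniquely. Second, on each interval where the support is fixed, I would differentiate the first-order (KKT) conditions $\sum_{j} C_j u_j(x)/u_j(\delta) = 1$ (which hold on the support, as shown in the proof of \Cref{thm:nashimp}) with respect to $C_i$, obtaining a linear system for the vector $(\partial \delta(x)/\partial C_i)_x$. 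Third, I would extract from this system a lower bound on $\frac{d}{dC_i} u_i(\delta) = \sum_x u_i(x) \, \partial\delta(x)/\partial C_i$. The key is to split the change in $u_i(\delta)$ into the ``direct'' effect of the extra $\epsilon$ that agent $i$ contributes (which, because the money goes through $\delta_i$ onto $i$-acceptable projects, contributes exactly $1$ per unit to $u_i(\delta)$ after accounting for how the pool grows) plus an ``indirect'' effect from how the other agents' contributions get redistributed, and to show the indirect effect is nonnegative.

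The main obstacle, as the authors themselves flag (``Attempts to prove \Cref{thm:nashir} by differentiating the first-order conditions $\ldots$ were unsuccessful''), is exactly controlling the sign of that indirect term and handling the non-differentiability at contributions where the support of $\nash$ changes. I would therefore expect the real proof to avoid a naive implicit-function-theorem computation and instead argue more cleverly: perhaps by comparing the two optimization problems at contributions $C_i$ and $C_i+\epsilon$ directly via a suitable feasible perturbation, or by exploiting convexity/monotone-comparative-statics structure of the family of maximizers, to get the derivative lower bound $\frac{d}{dC_i}u_i(\nash(C_{-i},C_i)) \ge \min\{u_i(x):u_i(x)>0\} = 1$ without ever needing the support to be smooth. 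One natural device is a homotopy/continuity argument: parametrize by $t \in [C_i-\epsilon, C_i]$, let $\delta^t = \nash(C_{-i}, t)$, and bound $u_i(\delta^t) - u_i(\delta^{t'})$ from below using the optimality of $\delta^t$ against the competitor obtained by scaling $\delta^{t'}$ up to the larger pool and shifting mass appropriately; optimality inequalities in the log domain, combined with the decomposition identity $\delta_i(x) = t\,\delta^t(x)u_i(x)/u_i(\delta^t)$, should then yield the clean bound after integration.

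Finally, I would note that once the inequality $u_i(\nash(C_{-i},C_i)) - u_i(\nash(C_{-i},C_i')) \ge C_i - C_i'$ is established for all $0 \le C_i' \le C_i$, including the boundary case $C_i' = 0$ where $\nash$ ignores agent $i$ entirely, the theorem follows immediately from \Cref{def:contribution-ic}, and the dichotomous case (hence \citeauthor{ABM17a}'s strict participation result) is subsumed. The various technical lemmas deferred to the appendix presumably handle (i) existence and uniqueness-up-to-utilities of $\nash$, (ii) continuity/Lipschitz-regularity of the trajectory, and (iii) the derivative lower bound itself on smooth pieces.
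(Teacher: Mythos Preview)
Your high-level outline---establish a pointwise derivative lower bound along the trajectory $C_i \mapsto \nash(C_{-i},C_i)$ and then integrate---matches the paper's two-step structure, and the target bound $\frac{d}{dC_i} u_i(\nash(C_{-i},C_i)) \ge 1$ is indeed equivalent (after normalizing by the pool size) to what the paper proves. But the proposal has a genuine gap at exactly the step that carries all the weight: you never say \emph{how} to obtain the derivative bound. You first propose differentiating the KKT conditions, then (correctly) recall that the authors report this route failed, and finally retreat to non-specific alternatives (``a suitable feasible perturbation'', ``monotone comparative statics'', ``a homotopy/continuity argument''). The direct/indirect split you suggest via the decomposition $\delta_i(x) = C_i\,\delta(x)u_i(x)/u_i(\delta)$ does not cleanly isolate a ``$+1$'' direct term either: when $C_i$ moves, the \emph{entire} decomposition $(\delta_j)_{j\in N}$ shifts along with $\delta$ and $u_i(\delta)$, so there is no stable ``agent $i$'s own money'' piece whose marginal contribution you can separate from the rest.

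What the paper actually does is quite different from anything in your sketch. It works in the polytope $\mathcal P_1$ of \emph{normalized utility profiles} rather than distributions, and proves the equivalent bound $d\mathcal U_1 \ge \tfrac{1-\mathcal U_1}{\csum}\,dC_1$ for the scaled utility $\mathcal U_1 = u_1(\nash(C))/\csum$ whenever $\mathcal U_1 < 1$; integrating this separable ODE-type inequality gives the theorem. The derivative bound itself comes from a \emph{second-order} argument: Taylor-expand $F_{C'}(U+dU)-F_C(U)$ to quadratic order in $dU$, obtaining a function $\phi(dU)$ sandwiched between a linear term minus $(1\pm\varepsilon)$ times a fixed quadratic form. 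One then lower-bounds $\phi$ at the true optimizer increment $dU'$ by evaluating $\phi$ at an explicit competitor $dU^t$, built from the perturbation ``scale spending on agent $1$'s acceptable projects by $(1+t)$ and scale the rest down to keep total mass fixed''; here \Cref{lem:first-order} and \Cref{lem:auxinequality2} control the linear and quadratic pieces. A separate analytical lemma (\Cref{lem:lintwicemax}) converts the fact that $r \mapsto \phi(r\,dU')$ is maximized at $r=1$, together with the quadratic sandwich, into the inequality $dC_1\,\tfrac{dU_1'}{U_1} \ge \mu\,\phi(dU')$ for every $\mu<2$. Chaining these two bounds and letting $\mu\to 2$, $\varepsilon\to 0$ gives the derivative estimate. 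This second-order quadratic-sandwich machinery, together with the specific perturbation $\delta^t$, is the real technical content of the proof, and nothing in your proposal anticipates it.
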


\begin{proof}
	Recall that we normalized utilities so that the utility assigned to least-preferred acceptable projects is 1 and so that the utility assigned to unacceptable projects is 0.
We must show that for all $\prof\in\profiles$ and $i\in N$,
\begin{align*}
	u_i(\nash(C_{-i}, C_i)) - C_i \ge u_i(\nash(C_{-i}, C'_i)) - C'_i \quad\text{for all $C'_i$ with $0 \le C_i' \le C_i$.}
\end{align*} 
Since \nash is invariant under replacing an agent with  utility function $u_i$ and contribution $C_i$ by two agents with utility function $u_i$ and contributions $C_i'$ and $C_i - C_i'$, respectively, it suffices to consider the case $C_i' = 0$.
Abusing notation, we write $C_{-i}$ for the profile with $(C_{-i})_i = 0$ and $(C_{-i})_j = C_j$ for $j\neq i$.
Consider the function $g\colon \profiles\rightarrow \Delta(1)$ with $g(\prof) = \nash(\prof) / \csum$ for all $\prof\in\profiles$.
We will show that
\begin{equation}\label{eq:irg}
	u_i(g(\prof)) \ge \frac{1}{\csum} ((\csum-C_i)u_i(g(\prof_{-i})) + C_i)\text,
\end{equation}
which is equivalent to the inequality above for $\nash$ with $C_i' = 0$.
We prove \eqref{eq:irg} with $i = 1$ as the focal agent.
For the remainder of the proof, fix the contributions $C_j$ of all agents $j \neq 1$, and assume that $C_j > 0$ for all $j \neq 1$. This is without loss of generality because $\nash$ ignores agents with zero contribution.

Denote by $\mathcal P_1\subseteq\mathbb R^n$ the polytope of feasible utility profiles scaled by $1/\csum$, i.e., $\mathcal P_1 = \{u(\delta)\colon \delta\in \Delta(1)\}$.
Since utility functions are linear, $\mathcal P_1$ is convex.
For $U\in \mathcal P_1$, let $F_\prof(U) = \sum_{i\in N}C_i \log U_i$.
Since by \Cref{lem:nashmax}, $F_\prof$ has a unique maximizer for all $\prof\in\profiles_{>0}$, we can define the function $\mathcal U\colon \profiles_{>0}\rightarrow \mathcal P_1$ that returns this unique maximizer for these profiles.
	
	Consider the function $\mathcal U_1(C_1) = u_1(g(C_1, C_{-1}))$ of agent 1's scaled utility as a function of $C_1$.
	If $\mathcal U_1(C_1) \ge 1$, then since $\mathcal U_1(C_1)$ is monotonically increasing in $C_1$ by \Cref{lem:continuous and monotone},
	\begin{align*}
		\mathcal U_1(C_1) = \frac1{\csum}\left((\csum - C_1)\mathcal U_1(C_1) + C_1\mathcal U_1(C_1)\right) \ge \frac1{\csum}\left((\csum - C_1)\mathcal U_1(0) + C_1\right),
	\end{align*}
	which proves~\eqref{eq:irg} in this case.
	The bulk of the proof is to derive a lower bound on the derivative of $\mathcal U_1(C_1)$ whenever $\mathcal U_1(C_1) < 1$.
	Then, integrating this derivative and using monotonicity of $\mathcal U_1$ gives \eqref{eq:irg}.
	
	\begin{step}\label{step:derivative}
 	Assume that $C_1 > 0$ and $\mathcal U_1(C_1) < 1$, and let $U = \mathcal U(\prof)$.
	Moreover, let $\mu\in (0,2)$ be arbitrary and let $\varepsilon^*$ be such that the conclusion of \Cref{lem:lintwicemax} holds;
	let $\varepsilon \in (0,\varepsilon^\ast)$. 
	Considering the Taylor expansion of the logarithm, there exists $\varepsilon' > 0$ such that for all $i\in N$ and $|r| < \varepsilon'$,
	\begin{equation}
		\left| \log(U_i + r) - \log U_i - \frac{r}{U_i} + \frac12 \left(\frac r {U_i}\right)^2\right| \le \frac{\varepsilon}{4} \left(\frac{r}{U_i}\right)^2\text.\label{eq:taylor}
	\end{equation}
	
	Now let $C'\in\profiles_{>0}$ be such that $C_1' = C_1 + dC_1$ with $0 < dC_1 <\min\{\varepsilon',\frac{\varepsilon}{(2 + \varepsilon)}\, C_1\}$ and $C_i' = C_i$ for all $i\in N\setminus\{1\}$.
	Consider the function $\phi\colon\mathbb R^n\rightarrow \mathbb R$ defined on $dU$ with $|dU|<\varepsilon^*$, such that
	\[
		\phi(dU) := F_{\prof'}(U + dU) - F_\prof(U) - dC_1\log U_1 = \sum_{i\in N} C_i \frac{dU_i}{U_i} + dC_1 \frac{dU_1}{U_1} - \psi(dU)\text,
	\]
	for some $\psi\colon\mathbb R^n\rightarrow \mathbb R$ with
	\[
		(1-\varepsilon)\frac12 \sum_{i\in N} C_i \left(\frac{dU_i}{U_i}\right)^2 \le \psi(dU) \le (1+\varepsilon)\frac12 \sum_{i\in N} C_i \left(\frac{dU_i}{U_i}\right)^2\text.
	\]	
	The existence of $\psi$ is guaranteed by \eqref{eq:taylor} and the bound on $dC_1$.
	
	Now let $U' = \mathcal U(\prof')$ and $dU' = U' - U$.
	Note that, since the only term in $\phi(dU)$ that depends on $dU$ is $F_{\prof'}(U + dU)$, $dU'$ maximizes $\phi$ among all $dU\in\mathbb R^n$ with $U + dU\in\mathcal P_1$.
	By \Cref{lem:polytope}, there is $\varepsilon''> 0$ such that, for all $dU\in\mathbb R^n$ with $|dU|\le\varepsilon''$ and $U + dU\in\mathcal P_1$, we have $U + rdU\in\mathcal P_1$ for all $r\in [0,2]$.
	Since $\mathcal U$ is continuous in $C$ by \Cref{lem:continuous and monotone}, $|dU'|$ will be small if $dC_1$ is small and we can choose $dC_1$ to be even smaller if necessary so that $2|dU'| \le \min(\varepsilon',\varepsilon'')$.
	Then, the function $\Phi\colon[0,2]\rightarrow \mathbb R$ with $\Phi(r) = \phi(rdU')$ is well-defined and satisfies the prerequisites of \Cref{lem:lintwicemax} with
	\[
	\alpha = \sum_{i\in N} C_i \frac{dU_i'}{U_i} + dC_1 \frac{dU_1'}{U_1} \quad\text{and}\quad \beta = \frac12 \sum_{i\in N} C_i \left(\frac{dU_i'}{U_i}\right)^2\text.
	\]
	Hence, it follows from \Cref{lem:lintwicemax} that
	\[
		\sum_{i\in N} C_i \frac{dU_i'}{U_i} + dC_1 \frac{dU_1'}{U_1} \ge\mu\Phi(1)\text.
	\]
	Since $U$ maximizes $F_\prof$, by \Cref{lem:first-order}, $\sum_{i\in N} C_i \frac{dU_i'}{U_i}\le 0$. It follows that
	\begin{equation}
	dC_1 \frac{dU_1'}{U_1} \ge\mu\Phi(1)\text.\label{eq:linmax}
	\end{equation}

	Next, let $\delta = g(\prof)$.
	Let $H_1 = \sum_{a\in A : u_1(a) > 0} \delta(a)$ be the fraction spent on agent 1's acceptable projects, i.e., those that agent 1 assigns positive utility.
	Recall that $U_1 < 1$, and so $H_1 < 1$.
	Since \nash gives agents with positive contribution positive utility, we have $H_1 > 0$. 
	From $0 < H_1 < 1$, we get that $\delta(a) < 1$ for all $a \in A$.
	Thus, for $|t|> 0$ small enough, take the distribution $\delta^t$ with 
	\[
	\delta^t(a) = 
	\begin{cases}
		(1+t)\delta(a) & \text{for all $a\in A$ with $u_1(a) > 0$,} \\
		(1-\frac{H_1}{1-H_1}\,t)\delta(a) & \text{for all $a\in A$ with $u_1(a) = 0$.}
	\end{cases}
	\]
	One can check that $\delta^t \in \Delta(1)$:
	\begin{align*}
	\sum_{a\in A} \delta^t(a)
	&= \sum_{\substack{a\in A\\u_1(a)>0}} (1+t)\delta(a) + \sum_{\substack{a\in A\\u_1(a)=0}} (1-\tfrac{H_1}{1-H_1}t)\delta(a) \\
	&= (1+t)\sum_{\substack{a\in A\\u_1(a)>0}} \delta(a) + (1-\tfrac{H_1}{1-H_1}t) \sum_{\substack{a\in A\\u_1(a)=0}} \delta(a) \\
	&= (1+t)H_1 + (1-\tfrac{H_1}{1-H_1}t) (1-H_1) = 1.
	\end{align*}
	Let $dU^t = u(\delta^t) - U$.
	For $|t|$ small enough, we have that $U + dU^t\in\mathcal P_1$ and $U - dU^t\in\mathcal P_1$. Indeed, $U + dU^t = u(\delta^t)$, and for the second statement we can perturb $\delta$ infinitesimally in the opposite direction. This is a valid perturbation because $\delta(a)<1$ for all $a\in A$, and for $a\in A$ such that $\delta(a)=0$ we have $\delta^t(a)=\delta(a)$.
	Thus, by \Cref{lem:first-order}, we have 
	\[
		\sum_{i\in N} C_i \frac{dU^t_i}{U_i} = 0\text.
	\]
	So for sufficiently small $|t|$, we have 
	\[
		\phi(dU^t) = dC_1 \frac{dU^t_1}{U_1} - \psi(dU^t) \ge dC_1 \frac{dU^t_1}{U_1} - (1+\varepsilon)\frac12 \sum_{i\in N} C_i \left(\frac{dU_i^t}{U_i}\right)^2\text.
	\]
	Since $dU_1^t = u_1(\delta^t) - U_1 = (1+t) U_1 - U_1$, we have that $\frac{dU_1^t}{U_1} = t$.
	Similarly, it follows that $-\frac{H_1}{1-H_1}\,t\le \frac{dU_i^t}{U_i} \le t$ for all $i\in N$. 
	
	Now, by definition of $H_1$, we have $U_1 \ge H_1$.
	Thus $1-U_1 \le 1 - H_1$. Hence $-\frac{U_1}{1-U_1}\le -\frac{H_1}{1-H_1}$. 
	Thus, applying \Cref{lem:auxinequality2} with $\alpha = \frac{U_1}{(1-U_1)}\,t$, $\beta = t$, and $x_i = \frac{dU_i^t}{U_i}$, it follows that
	\[	
		\phi(dU^t) \ge dC_1 t - (1 + \varepsilon) \frac12\frac{U_1 \csum}{1-U_1} t^2\text.
	\]
	Now let $t := \frac{1-U_1}{U_1 \csum}dC_1$.
	If $dC_1$ is small enough, then $t$ is also small enough and, recalling that $dU'$ maximizes $\phi$ among all $dU\in\mathbb R^n$ with $U + dU\in\mathcal P_1$, we get
	\[
		\Phi(1) = \phi(dU') \ge \phi(dU^t) \ge \frac12(1-\varepsilon)\frac{1-U_1}{U_1\csum}(dC_1)^2\text.
	\]
	Thus, by \eqref{eq:linmax}, we get
	\[
		dC_1\frac{dU_1'}{U_1} \ge \frac\mu2(1-\varepsilon)\frac{1-U_1}{U_1\csum}(dC_1)^2\text,
	\]
	from which it follows from $dC_1 > 0$ that
	\[
		dU_1' \ge \frac\mu2(1-\varepsilon)\frac{1-U_1}{\csum}dC_1\text.
	\]
	Since $\mu\in (0,2)$ was arbitrary and $\varepsilon > 0$ can be chosen arbitrarily small, it follows that 
	\[
		dU_1' \ge \frac{1-U_1}{\csum} dC_1\text.
	\]
\end{step}
	
\begin{step}\label{step:integral}
	We show \eqref{eq:irg} for $C_1 > 0$.
	(The case $C_1 = 0$ is trivial.)
	By \Cref{lem:continuous and monotone}, $\mathcal U_1(s)$ is monotonically increasing in $s\in[0,B_1]$.
	We have already proved \eqref{eq:irg} in the case $\mathcal U_1(C_1) \ge 1$. 
	Hence, we may assume $\mathcal U_1(s) < 1$ for all $s \in [0,C_1]$.
	
	Let $\epsilon\in (0,C_1)$ be arbitrary.
	By \Cref{step:derivative}, the lower right derivative of $\mathcal U_1$ at $s\in(\epsilon, C_1)$ is at least $\frac{1-\mathcal U_1(s)}{\csum - C_1 + s}$.
	Integrating this estimate from $\epsilon$ to $C_1$ yields
	\[
	-\int_{\epsilon}^{C_1} \frac{\frac{\partial\mathcal U_1(s)}{\partial s}}{1-\mathcal U_1(s)} ds \le -\int_\epsilon^{C_1} \frac{1}{\csum -C_1+s}ds
	\]
	from which we get
	\[
		\log(1-\mathcal U_1(C_1)) - \log(1-\mathcal U_1(\epsilon)) \le -(\log \csum - \log(\csum -C_1 + \epsilon))\text.
	\]
	Exponentiation yields $\frac{1-\mathcal U_1(C_1)}{1-\mathcal U_1(\epsilon)} \le \frac{\csum- C_1 + \epsilon}{\csum}$. 
	Since $\epsilon$ was arbitrary and $\mathcal U_1$ is monotonic, we get $\frac{1-\mathcal U_1(C_1)}{1-\mathcal U_1(0)} \le \frac{\csum- C_1}{\csum}$.
	Rewriting this equation gives us
	\[
		\mathcal U_1(C_1) \ge \frac1{\csum}\left((\csum - C_1)\mathcal U_1(0) + C_1\right),
	\]
	which is \eqref{eq:irg}.\qedhere
	\end{step}
\end{proof}

\section{Limits of Efficient Mechanisms}
\label{sec:limits}

In this section, we discuss the limits that we run into if we try to strengthen our notions of decomposability and contribution incentive-compatibility as described in Section~\ref{sec:prelims}. 
Specifically, we show that these strengthenings are incompatible with efficiency.

First, we consider \emph{strong decomposability}, which requires that $\delta$ can be divided into individual distributions $(\delta_i)_{i \in N}$ where for each $i \in N$, we have $\delta_i(x) > 0$ only if $i$ has assigned \emph{maximum} utility to $x$, i.e., only if $u_i(x) \ge u_i(y)$ for all $y \in A$.
In other words, each agent is only asked to spend her contributions on her favorite projects.

\begin{proposition}\label{thm:impleff}
No efficient mechanism satisfies strong decomposability when $m\ge 3$ and $n\ge2$.
\end{proposition}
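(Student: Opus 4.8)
The plan is to exhibit a single contribution profile on which the unique strongly decomposable distribution is Pareto dominated; any efficient mechanism must then violate strong decomposability on that profile, which proves the claim. The decisive design choice is to make the two agents have \emph{disjoint singleton} argmax sets — so that strong decomposability pins the output down completely — while a third ``compromise'' project that is nobody's favorite yields strictly higher utility to both agents when the whole pool is concentrated on it.

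Concretely, first I would take $A=\{a,b,c\}$, two agents with $C_1=C_2=1$, and utilities $u_1(a)=1.5,\ u_1(b)=0,\ u_1(c)=1$ and $u_2(a)=0,\ u_2(b)=1.5,\ u_2(c)=1$ (this respects the normalization $\min\{u_i(x):u_i(x)>0\}=1$). Then $\argmax_{x\in A}u_1(x)=\{a\}$ and $\argmax_{x\in A}u_2(x)=\{b\}$, so strong decomposability forces $\delta_1=a$ and $\delta_2=b$, i.e.\ the mechanism must output $\delta=a+b$, with $u_1(\delta)=u_2(\delta)=1.5$. But the distribution $2\,c\in\Delta(2)$ gives $u_1(2\,c)=u_2(2\,c)=2$, so it dominates $\delta$, contradicting efficiency. (The only property of the numbers used is $1<u_i(\text{favorite})<2\,u_i(c)$, which makes concentrating on $c$ a strict joint improvement.) For arbitrary $m\ge 3$ and $n\ge 2$ I would then pad: add $m-3$ projects unacceptable to both real agents and $n-2$ additional agents with zero contribution. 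Neither efficiency (which ignores zero-contribution agents, and the dominating distribution $2\,c$ never funds the padding projects) nor strong decomposability (since $\Delta(0)=\{0\}$ forces the padding agents' individual distributions to vanish, and the real agents' argmax sets are unchanged) is affected, so the argument carries over verbatim.

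The hard part is locating the right profile rather than verifying it: most naive constructions fail, because whenever the argmax sets overlap (concentrate the pool on a common favorite) or the favorites are ``spread out'' across agents (let each agent fund one of her own favorites), the forced decomposed outcome is already efficient and strong decomposability is satisfiable. The construction above circumvents this by making each agent's favorite worthless to the other agent while the compromise project $c$ is valued strongly enough — more than half of the top utility — by everyone; this is precisely the configuration in which the only strongly decomposable distribution is wasteful. Once this profile is in hand, the proof is immediate from the definitions of efficiency and strong decomposability, with no further computation needed.
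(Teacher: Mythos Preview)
Your proposal is correct and essentially identical to the paper's own proof: the paper uses the same two-agent, three-project configuration with $u_1(a)=1+\varepsilon$, $u_1(c)=1$, $u_1(b)=0$ (and symmetrically for agent~2), observes that strong decomposability forces the distribution $a+b$, and notes that $2\,c$ Pareto dominates it. Your choice $\varepsilon=0.5$ is just a specific instance, and your padding argument for general $m\ge 3$, $n\ge 2$ is a routine extension the paper leaves implicit.
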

\begin{proof}
\begin{table}[htb]
	\centering
	\[
	\begin{array}{rccc@{\hskip 3em}c}
		\toprule
		& u_i(a) & u_i(b) & u_i(x) & C_i\\\midrule
		\text{Agent 1} & 1+\varepsilon & 0 & 1 & 1\\
		\text{Agent 2} & 0 & 1+\varepsilon & 1 & 1\\
		\bottomrule
	\end{array}
	\]
	\caption[Type profile]{Profile  with $0 < \varepsilon < 1$ showing the incompatibility of strong decomposability and efficiency.}
	\label{tab:impeff}
\end{table}

	To see that strong decomposability is in conflict with efficiency, consider the example in \Cref{tab:impeff}.
	Here, both agents 1 and 2 have a pet project $a$ and $b$, respectively, which the other agent dislikes; there is also a compromise project $x$, which is close to optimal for both.
	It is best for an agent to spend her entire contribution on her pet project independently of what the other agent is doing.
	So the only allocation we can implement in the above sense is $a + b$, which gives utility $1+\varepsilon$ for both.
	But this fails to make use of the mutual interest in $x$:
	if they spent the whole pool of 2 on $x$, they could achieve utility 2 each.
\end{proof}

One can interpret the situation in \Cref{tab:impeff} as a prisoner's dilemma in which agents cooperate by spending on $x$ or defect (free-ride) by spending on $a$ and $b$.

The strengthening of contribution incentive-compatibility we discussed in Section~\ref{sec:prelims} especially requires that $u_i(f(\prof)) \ge u_i(f(\prof_{-i})) + C_iu_i^{\max}$, where $u_i^{\max} = \max_{y\in A}u_i(y)$. 
This \emph{strong contribution incentive-compatibility} cannot be satisfied in conjunction with efficiency.
The strong version makes sense if agents can use their money to fund public projects without going through the aggregation mechanism. This is typically the case for charities, but may be less applicable for some of the other scenarios discussed in \Cref{sec:intro}, such as residents of an apartment complex.

\begin{proposition}\label{thm:nondicho}
No efficient mechanism is strongly contribution incentive-compatible when $m\ge 4$ and $n\ge 3$.
\end{proposition}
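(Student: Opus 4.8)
The plan is to exhibit a concrete instance with $m=4$ projects and $n=3$ agents on which no efficient mechanism can be strongly contribution incentive-compatible, and then extend it to larger $m$ and $n$ by adding projects that the three relevant agents find unacceptable (so efficiency never uses them) and agents that are indifferent among all projects and contribute nothing in the relevant profile (so the mechanism ignores them). For the core instance I would pick utilities in the spirit of the prisoner's dilemma used in the proof of \Cref{thm:impleff}: three agents, each with a private ``pet'' project that the other two dislike, together with one further project that the agents partially agree on; contributions are chosen (e.g.\ all equal) so that efficiency heavily constrains the outcome on the relevant profiles.

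Assume for contradiction that $f$ is efficient and strongly contribution incentive-compatible. Since $f$ ignores agents contributing nothing, its restriction to profiles supported on a subset $S$ of the agents is again efficient and strongly contribution incentive-compatible for the induced instance, so one can argue by peeling off agents. First, at single-agent sub-profiles efficiency forces the entire pool onto the active agent's top project. Next, at two-agent sub-profiles, efficiency confines $f$ to the Pareto frontier, and the inequality $u_i(f(\prof)) \ge u_i(f(\prof_{-i})) + C_i u_i^{\max}$ --- with the single-agent values just computed plugged in --- pins the two active agents to prescribed utilities, which (in the engineered instance) in turn forces the value of the outcome to the removed agent. Finally, at the full profile I would apply $u_i(f(\prof)) \ge u_i(f(\prof_{-i})) + C_i u_i^{\max}$ once for each $i$ and combine the three resulting lower bounds; the utilities are chosen so that these bounds, together with efficiency of $f(\prof)$ and $\sum_i C_i = \csum$, are mutually contradictory (intuitively, they would force $f(\prof)$ to give every agent strictly more than the partial-agreement allocation, which is impossible because that allocation is already Pareto optimal, or alternatively to exceed the total utility that $\csum$ units of money can deliver).

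The step I expect to be the main obstacle is the design of the instance, namely tuning the utility values so that the accumulated lower bounds at the full profile genuinely outrun the Pareto frontier. There is considerable slack to overcome: whenever an agent's top project is not unique, or whenever the ``favorite-routing'' distribution $\sum_i C_i \delta_i$ with each $\delta_i$ supported on agent $i$'s best projects is itself efficient, an efficient mechanism can satisfy strong contribution incentive-compatibility with equality, and indeed for $n \le 2$ or $m \le 3$ there is no obstruction at all. The construction must therefore use the fourth project to render favorite-routing inefficient (a prisoner's-dilemma-style trade), while using the third agent to force, through the recursion on the two-agent sub-profiles, a utility guarantee at the full profile that is strictly above what favorite-routing --- or any efficient distribution --- can provide; it is precisely the need to meet both requirements at once that accounts for the bounds $m \ge 4$ and $n \ge 3$. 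Once the instance is fixed, the single-agent and two-agent reductions and the concluding inequality are routine linear arithmetic.
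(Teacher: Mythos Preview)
Your approach is essentially the paper's: it uses the symmetric three-agent instance with pet projects $a,b,c$ (each valued $2-\varepsilon$ by its owner and $0$ by the others) and a compromise project $x$ (valued $1$ by all), with $C_1=C_2=C_3=1$, and reaches a contradiction by peeling off agents exactly as you outline. The paper's execution is leaner than you anticipate---efficiency at each level forces at most one pet project to be funded, and a \emph{single} chain suffices: strong IC for agent~2 at the $\{1,2\}$-subprofile yields $\delta'(x)\ge 2-\varepsilon$, hence $u_3(\delta')\ge 2-\varepsilon$, and then strong IC for agent~3 at the full profile gives $\delta(x)\ge 4-2\varepsilon>3=\csum$.
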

\begin{proof}

\begin{table}[htb]
	\centering
	\[
	\begin{array}{rc@{\hskip 2em}c@{\hskip 2em}c@{\hskip 2em}c@{\hskip 3em}c}
		\toprule
		& u_i(a) & u_i(b) & u_i(c) & u_i(x) & C_i\\\midrule
		\text{Agent 1} & 2-\varepsilon & 0 & 0 & 1 & 1\\
		\text{Agent 2} & 0 & 2-\varepsilon & 0 & 1 & 1\\
		\text{Agent 3} & 0 & 0 & 2-\varepsilon & 1 & 1\\
		\bottomrule
	\end{array}
	\]
	\caption[Type profile]{Profile with $0 < \varepsilon < 0.5$ used in the proof of \Cref{thm:nondicho}.}
	\label{tab:nondicho}
\end{table}

Assume for contradiction that there exists a mechanism $f$ that is strongly contribution incentive-compatible and  efficient.
For $\prof = (1, 1, 1)$ as in \Cref{tab:nondicho},
the distribution $\delta = f(\prof)$ should only allocate resources to at most one of $a$, $b$, and $c$.
Otherwise, if there is any subset $\{y,z\} \subset \{a,b,c\}$, $y \neq z$ with $\delta(y)>0$ and $\delta(z)>0$, the distribution
\[
(\delta(y)-\kappa)\ y + (\delta(z)-\kappa)\ z + (\delta(x)+2\kappa)\ x
\] with $\kappa = \min(\delta(y),\delta(z))$ is strictly preferred by all three agents. Thus, without loss of generality, we can assume that  $\delta(c) = 0$.

Starting with agent $1$, we let the other agents join one after another and, using strong contribution incentive-compatibility, derive lower bounds on the resources allocated to project $x$.
It will turn out that after agent 3 has joined, the mechanism would have to allocate more than the whole pool of 3 to $x$ in order to accommodate strong contribution incentive-compatibility, which is a contradiction.

Let $\prof'= (1, 1, 0)$ and $\delta'= f(\prof')$. 
As above, efficiency implies that either $\delta'(a) = 0$ or $\delta'(b)=0$. Otherwise, if $\delta'(a)>0$ and $\delta'(b)>0$, the distribution
\[
(\delta'(a)-\kappa')\ a + (\delta'(b)-\kappa')\ b + (\delta'(x)+2\kappa')\ x 
\] 
with $\kappa' = \min(\delta'(a),\delta'(b))$ is strictly preferred by both agents with a utility improvement of $2\kappa'\varepsilon > 0$.
We assume that $\delta'(b) = 0$.
Treating the case $\delta'(a) = 0$ requires no more than switching the order of agents~1 and~2. 

By strong contribution incentive-compatibility, agent 2 must get at least the same utility as if both agents acted in an uncoordinated manner: $u_2(\delta') \geq u_2(a) + C_2 u_2^{\max} = u_2(a) + (2-\varepsilon) = 2-\varepsilon$ and with $\delta'(b)=0$, we have $\delta'(x)\geq \frac{2-\varepsilon}{u_2(x)} = 2-\varepsilon$.

Thus the utility of agent 3 from $\delta'$ can be bounded from below by $u_3(\delta') \geq \delta'(x)\ u_3(x) \geq (2-\varepsilon)$.

Applying strong contribution incentive-compatibility for agent 3 yields $u_3(\delta) \geq u_3(\delta') + C_3 u_3^{\max} \ge (2-\varepsilon) + 1 \cdot (2-\varepsilon) = 4-2\varepsilon$. As $\delta(c)=0$, agent 3 can only get positive utility from project $x$, and thus $\delta(x) = \frac{u_3(\delta)}{u_3(x)} \geq 4-2\varepsilon > 3$ for $0 < \varepsilon < 0.5$, which exceeds the pool of 3.
\end{proof}

The reason for this incompatibility is structurally similar to that for decomposability: efficiency requires spending resources on the compromise project $x$, but strong contribution incentive-compatibility can only be satisfied if the pet projects $a$, $b$, and $c$ are funded.

In computational experiments, it appears that $\nash$ satisfies a version of contribution incentive-compatibility that is stronger than the standard version but weaker than strong contribution incentive-compatibility. This version is inspired by the proportional spending property that we saw in the proof of \Cref{thm:nashimp} and in the dynamic procedure that converges to $\nash$ (\Cref{sec:computation}). Let $\prof \in \profiles$ be a contribution profile and write $\delta = \nash(\prof)$. For an agent $i \in N$ with $C_i > 0$ and for an amount $\epsilon > 0$ of potential extra contribution, let $\delta_\epsilon \in \Delta(\epsilon)$ be the distribution with $\delta_\epsilon(x) = \alpha \cdot \delta(x) \cdot u_i(x)$ for all $x \in A$, where $\alpha = \epsilon / (\csum \cdot u_i(\delta))$. Thus, $\delta_\epsilon(x)$ is proportional to the utility that $i$ derives from project $x$ in distribution $\delta$. We conjecture that $\nash$ satisfies
\[
u_i(\nash(C_{-i}, C_i + \epsilon)) \ge u_i(\delta) + u_i(\delta_\epsilon).
\]
This property lies between our two definitions of contribution incentive-compatibility because $\epsilon \le u_i(\delta_\epsilon) \le \epsilon \cdot u_i^{\max}$.

When only allowing dichotomous utility functions, both decomposability and contribution incentive-compatibility coincide with their strong counterparts. Hence, Propositions \ref{thm:impleff} and \ref{thm:nondicho} do not apply. In this restricted setting, which has been well-studied \citep{BMS05a,Dudd15a,ABM17a,BBPS21a}, \nash becomes an even stronger candidate mechanism, though Duddy's \citeyearpar{Dudd15a} \emph{conditional utilitarian rule}, which returns the decomposable distribution with the highest utilitarian welfare, constitutes an attractive alternative.

\subsection*{Acknowledgements}
This material is based on work supported by the Deutsche Forschungsgemeinschaft under grants {BR~5969/1-1}, {BR~2312/11-1}, and {BR~2312/12-1}, and by the European Research Council (ERC) under grant number 639945.
Preliminary results of this paper were presented at the AAMAS Workshop on Games, Agents, and Incentives (Montréal, May 2019).
We are grateful to Fedor Nazarov for suggesting the proof technique for \Cref{thm:nashir}, to Martin Bullinger for pointing out a more compact proof for \Cref{thm:nondicho}, and to Herv{\'e} Moulin, Klaus Nehring, and Erel Segal-Halevi for helpful and encouraging feedback.

\clearpage

\appendix

\section{Lemmas Used for the Proof of \Cref{thm:nashir}}

\begin{lemma}\label{lem:nashmax}
	For all $\prof\in\profiles_{>0}$, $F_\prof$ has a unique maximizer $U\in\mathcal P_1$.
	Moreover, $C_i \le U_i\le \csum u_i^{\max}$ for all $i\in N$.
\end{lemma}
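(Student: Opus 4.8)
The plan is to verify the three assertions separately: existence of a maximizer of $F_\prof$ on $\mathcal P_1$, its uniqueness, and the coordinate bounds. \emph{Existence.} Because $u$ is linear, $\mathcal P_1$ is the image of the simplex $\Delta(1)$ under a linear map, hence compact and convex. The objective $F_\prof(U)=\sum_{i\in N}C_i\log U_i$ is finite and continuous on $\mathcal P_1\cap\mathbb R^n_{>0}$ but tends to $-\infty$ as soon as some coordinate tends to $0$ (here I use $C_i>0$ for every $i$, which holds since $\prof\in\profiles_{>0}$). I would first exhibit a point of $\mathcal P_1$ all of whose coordinates are strictly positive: the utility profile of the uniform distribution $\tfrac1m\sum_{a\in A}a$ works, since the normalization convention (an all-indifferent agent is set to constant utility $1$) ensures every agent has at least one acceptable project, so that profile has $i$-th coordinate at least $\tfrac1m$. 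Hence $\sup_{\mathcal P_1}F_\prof>-\infty$. Extending $F_\prof$ to all of $\mathcal P_1$ by the value $-\infty$ on $\{U\in\mathcal P_1:\min_i U_i=0\}$ yields a function that is upper semicontinuous on the compact set $\mathcal P_1$ (along any sequence in which some coordinate shrinks to $0$, that term forces $F_\prof\to-\infty$ while the remaining terms stay bounded above by $\sum_j C_j\log u_j^{\max}$, using $U_j\le u_j^{\max}$ on $\mathcal P_1$) and is not identically $-\infty$; it therefore attains its supremum, and the maximizer $U$ necessarily lies in $\mathbb R^n_{>0}$.

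\emph{Uniqueness.} On the convex open set $\mathbb R^n_{>0}$ the function $F_\prof$ is twice continuously differentiable with Hessian $\mathrm{diag}\bigl(-C_1/U_1^2,\dots,-C_n/U_n^2\bigr)$, which is negative definite because every $C_i>0$. Thus $F_\prof$ is strictly concave on $\mathbb R^n_{>0}$, in particular on the convex set $\mathcal P_1\cap\mathbb R^n_{>0}$, and a strictly concave function on a convex domain has at most one maximizer; combined with existence, $U$ is unique.

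\emph{Bounds.} The maximizer equals $U=u(\delta^\star)$, where $\delta^\star\in\Delta(1)$ maximizes $\delta\mapsto F_\prof(u(\delta))$; since multiplying a distribution by a positive scalar only shifts this objective by an additive constant, $\delta^\star=\nash(\prof)/\csum$, i.e.\ $\csum\,U=u(\nash(\prof))$. The upper bound is the one-line estimate $u_i(\delta)=\sum_x\delta(x)u_i(x)\le u_i^{\max}\sum_x\delta(x)$, valid for every distribution $\delta$, applied to $\nash(\prof)\in\Delta(\csum)$, giving $\csum\,U_i\le\csum\,u_i^{\max}$. For the lower bound I would invoke decomposability of $\nash$ (\Cref{thm:nashimp}): write $\nash(\prof)=\sum_{j\in N}\delta_j$ with $\delta_j\in\Delta(C_j)$ and $\supp(\delta_j)\subseteq\{x\in A:u_j(x)>0\}$. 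Then $u_i(\nash(\prof))=\sum_{j\in N}u_i(\delta_j)\ge u_i(\delta_i)=\sum_x\delta_i(x)u_i(x)\ge\sum_x\delta_i(x)=C_i$, where the last inequality uses the normalization $u_i(x)\ge1$ for acceptable projects together with $\supp(\delta_i)\subseteq\{x:u_i(x)>0\}$, and where we discarded the nonnegative terms $u_i(\delta_j)$ with $j\neq i$. Reading $U$ at the scale of $\nash(\prof)$, this gives $C_i\le U_i\le\csum\,u_i^{\max}$.

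I expect the lower bound to be the only genuinely substantive step, as it is the one place where the structure of the Nash maximizer (rather than mere compactness, concavity, or a trivial estimate) is used; an alternative is to read it off directly from the first-order/KKT conditions of the optimization, exactly as in the proof of \Cref{thm:nashimp}. The one point requiring mild care in the existence argument is preventing the logarithmic barrier from pushing the supremum onto the boundary of $\mathcal P_1$, which the upper-semicontinuity-on-a-compact-set packaging settles cleanly.
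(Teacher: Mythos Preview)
Your proposal is correct and follows essentially the same route as the paper: strict concavity of $F_\prof$ for uniqueness, a trivial estimate for the upper bound, and decomposability of $\nash$ (\Cref{thm:nashimp}) for the lower bound $u_i(\nash(\prof))\ge u_i(\delta_i)\ge C_i$. The only real difference is that you supply a careful existence argument (exhibiting an interior point of $\mathcal P_1$ and using upper semicontinuity on the compact polytope), whereas the paper leaves existence implicit and argues uniqueness directly via the midpoint inequality rather than the Hessian; these are cosmetic differences, not substantive ones.
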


\begin{proof}
	Assume for a contradiction that there are two distinct $U', U''\in\mathcal P_1$ which maximize $F_\prof$. 
	As a positive linear combination of strictly concave functions, $F_\prof$ is a strictly concave function. 
	Hence, for $U = \frac12\,(U' + U'')\in\mathcal P_1$, by strict concavity of $F_\prof$, we have
	\[
		F_\prof(U) > \frac12\left( F_\prof(U') + F_\prof(U'')\right) = F_\prof(U')\text,
	\]
	which contradicts the assumption that $U'$ maximizes $F_\prof$ over $\mathcal P_1$.	
	Now let $i\in N$.
	Clearly, $U_i$ is upper bounded by $\csum u_i^{\max}$.
	For the lower bound on $U_i$, recall from \Cref{thm:nashimp} that $\nash$ is decomposable.
	Thus the distribution $\nash(\prof)$ is the sum of distributions $\delta_j$, $j\in N$, such that $\delta_j\in\Delta(C_j)$ and $\delta_j(x) > 0$ only if $u_j(x) > 0$.
	In particular, $u_i(\delta_i) \ge C_i$, since we have normalized utility functions so that the lowest positive utility of each agent is 1.
\end{proof}
		
Recall that $\mathcal U\colon\profiles_{>0}\rightarrow \mathbb R_{\ge 0}^n$ returns the unique maximizer of $F_\prof$.
By \Cref{lem:nashmax}, $\mathcal U$ is well-defined.
We show that $\mathcal U(\prof)$ is continuous.
Moreover, the utility of every agent is weakly increasing in her contribution. 

\begin{lemma}\label{lem:continuous and monotone}
	$\mathcal U(C)$ is continuous in $C$ on $\profiles_{>0}$ and $\mathcal U_1(C_1)$ is weakly increasing in $C_1$.
\end{lemma}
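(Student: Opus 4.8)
The plan is to treat the two claims by separate arguments: the monotonicity claim by a short comparison of the optimality conditions of $\mathcal U$ at two nearby profiles, and the continuity claim by Berge's maximum theorem, after first confining the maximizer to a compact region on which the objective is finite.

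For \emph{monotonicity}, fix the contributions $C_j$ for $j \neq 1$ (all positive, since we work on $\profiles_{>0}$) and take $0 < C_1 < C_1'$. Write $U = \mathcal U(C_1, C_{-1})$ and $U' = \mathcal U(C_1', C_{-1})$; by \Cref{lem:nashmax} every coordinate of $U$ and $U'$ is bounded below by the matching positive contribution, so all logarithms below are finite. Optimality of $U$ for $F_{(C_1, C_{-1})}$ on $\mathcal P_1$ gives $C_1(\log U_1 - \log U_1') \ge \sum_{j \neq 1} C_j(\log U_j' - \log U_j)$, and optimality of $U'$ for $F_{(C_1', C_{-1})}$ gives $C_1'(\log U_1' - \log U_1) \ge \sum_{j \neq 1} C_j(\log U_j - \log U_j')$. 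Adding the two inequalities cancels the right-hand sides and leaves
\[
  (C_1 - C_1')(\log U_1 - \log U_1') \ge 0;
\]
since $C_1 - C_1' < 0$ this forces $U_1 \le U_1'$, which is exactly the weak monotonicity of $\mathcal U_1$.

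For \emph{continuity}, fix $C^0 \in \profiles_{>0}$ and restrict attention to the neighborhood $W$ of $C^0$ on which every contribution stays above $\tfrac12 C_i^0 > 0$. By \Cref{lem:nashmax}, for every profile in $W$ the maximizer of $F_C$ over $\mathcal P_1$ lies in the fixed compact set $K = \{U \in \mathcal P_1 : U_i \ge \tfrac12 C_i^0 \text{ for all } i \in N\}$ and is also the maximizer over $K$; on $W \times K$ the map $(C, U) \mapsto F_C(U) = \sum_{i \in N} C_i \log U_i$ is jointly continuous. Since the constraint set $K$ does not vary with $C$, Berge's maximum theorem yields upper hemicontinuity of the argmax correspondence at $C^0$, and single-valuedness (again \Cref{lem:nashmax}) upgrades this to continuity of $\mathcal U$. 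A self-contained alternative is a subsequence argument: if $C^k \to C^0$ and $\mathcal U(C^k) \to U^\ast$ along a subsequence, then letting $k \to \infty$ in $F_{C^k}(\mathcal U(C^k)) \ge F_{C^k}(U)$ — valid for every $U \in \mathcal P_1$, and trivially so when some coordinate of $U$ vanishes since then $F_{C^0}(U) = -\infty$ — shows that $U^\ast$ maximizes $F_{C^0}$, so $U^\ast = \mathcal U(C^0)$ by uniqueness.

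I do not expect a genuine obstacle; the one point requiring care is that $F_C$ equals $-\infty$ on the part of $\partial \mathcal P_1$ where a coordinate vanishes, so one must use the a priori lower bound $U_i \ge C_i$ from \Cref{lem:nashmax} to restrict to a compact set on which the objective is continuous before invoking the maximum theorem (or taking any limit).
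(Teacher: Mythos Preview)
Your proposal is correct. For continuity, your subsequence argument is essentially identical to the paper's: both confine the maximizer to a compact set via the lower bound $U_i \ge C_i$ from \Cref{lem:nashmax}, pass to a convergent subsequence, and use uniqueness of the maximizer to identify the limit; your Berge alternative is a clean repackaging of the same idea.

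For monotonicity the route differs slightly. The paper argues by contradiction: assuming $U_1' < U_1$, it combines the single inequality $F_C(U') \le F_C(U)$ with $s\log U_1' < s\log U_1$ to obtain $F_{C'}(U') < F_{C'}(U)$, contradicting optimality of $U'$. You instead add the two optimality inequalities for $U$ and $U'$ to obtain $(C_1 - C_1')(\log U_1 - \log U_1') \ge 0$ directly. Your argument is the standard monotone comparative statics trick and is marginally more symmetric; the paper's avoids one of the two optimality comparisons at the cost of a contradiction setup. Both are equally short and valid.
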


\begin{proof}
	First we show that $\mathcal U$ is continuous in $C$ on $\profiles_{>0}$.
		Let $\prof\in\profiles_{>0}$ and consider a sequence $(C^k)_{k\in \mathbb N}\subseteq\mathbb R_{>0}^n$ converging to $C$.
		Further, let $U^k = \mathcal U(\prof^k)$ and $U = \mathcal U(\prof)$.
		Observe that since $C^k$ converges to $C$, by \Cref{lem:nashmax}, we have $0<\lambda\le U_i^k\le\Lambda$ for all $i$ and some $\lambda,\Lambda > 0$ and large enough $k$.
		Hence, by passing to a subsequence if necessary, we may assume that $U^k$ converges to $U^\ast$ for some $U^\ast\in\mathcal P_1$.
		Since the family of functions $F_\prof$, $F_{\prof^k}$, $k\in\mathbb N$, is uniformly equicontinuous on $[\lambda,\Lambda]^n$, it follows that $F_{\prof^k}(U^{k})$ converges to $F_\prof(U^\ast)$.
		Moreover, as $U^k$ maximizes $F_{\prof^k}$, we have $F_{\prof^k}(U^k) \ge F_{\prof^k}(U)$, which converges to $F_\prof(U)$.
		Hence, $U^\ast$ maximizes $F_{\prof}$, which, since $F_{\prof}$ has a unique maximizer by \Cref{lem:nashmax}, implies that $U^\ast = U$.
		Hence, $U^k$ converges to $U$.
	
	We prove that $\mathcal U_1(C_1) = \mathcal U(C_1,C_{-1})$ is weakly increasing in $C_1$.
	Let $s > 0$, $U = \mathcal U(C_1,C_{-1})$, and $U' = \mathcal U(C_1 + s,C_{-1})$.
	Assume for contradiction that $U_1' = \mathcal U_1(C_1 + s) < \mathcal U_1(C_1) = U_1$.
		Then,
		\[
			F_{\prof'}(U') = \sum_{i\in N} C_i \log U_i' + s \log U_1' < \sum_{i\in N} C_i \log U_i + s \log U_1 = F_{\prof'}(U)\text,
		\]
		where the inequality follows from the assumption that $U_1' < U_1$ and the fact that $U$ is a maximizer of $F_{\prof}$.
		This contradicts the assumption that $U'$ maximizes $F_{\prof'}$.
\end{proof}

\begin{lemma}\label{lem:polytope}
	For every $\prof\in\profiles$ and $U\in\mathcal P_1$, there is $\varepsilon > 0$ such that for all $dU\in\mathbb R^n$ with $|dU|\le \varepsilon$ and $U + dU\in\mathcal P_1$, we have $U + t dU\in\mathcal P_1$ for all $t\in[0,2]$.
\end{lemma}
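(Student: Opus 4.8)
The plan is to use that $\mathcal{P}_1$ is a polytope. Since $\Delta(1)$ is the unit simplex over the projects and $\delta\mapsto u(\delta)=(u_j(\delta))_{j\in N}$ is linear, $\mathcal{P}_1=\mathrm{conv}\{(u_j(x))_{j\in N}:x\in A\}$ is the convex hull of finitely many points, hence a bounded polyhedron, so it can be written as a finite intersection of halfspaces $\mathcal{P}_1=\{y\in\mathbb{R}^n:\langle a_\ell,y\rangle\le b_\ell,\ \ell=1,\dots,L\}$. (If $\mathcal{P}_1$ is not full-dimensional, I include for each equality $\langle a,y\rangle=b$ defining its affine hull both $\langle a,y\rangle\le b$ and $\langle -a,y\rangle\le -b$, so the description is purely by inequalities.) I would also first observe that, by convexity of $\mathcal{P}_1$, it suffices to prove $U+2\,dU\in\mathcal{P}_1$: the segment from $U$ to $U+2\,dU$ then lies in $\mathcal{P}_1$ and contains $U+t\,dU$ for every $t\in[0,2]$.

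Next I would fix $U\in\mathcal{P}_1$ and partition $\{1,\dots,L\}$ into constraints that are \emph{tight} at $U$ (i.e.\ $\langle a_\ell,U\rangle=b_\ell$) and those that are \emph{slack} ($\langle a_\ell,U\rangle<b_\ell$). For a tight $\ell$, the hypothesis $U+dU\in\mathcal{P}_1$ forces $\langle a_\ell,dU\rangle\le 0$, so $\langle a_\ell,U+2\,dU\rangle=b_\ell+2\langle a_\ell,dU\rangle\le b_\ell$ automatically. For a slack $\ell$, set $\eta_\ell:=b_\ell-\langle a_\ell,U\rangle>0$; then whenever $|dU|\le\eta_\ell/(2|a_\ell|)$ we get $\langle a_\ell,U+2\,dU\rangle\le b_\ell-\eta_\ell+2|a_\ell|\,|dU|\le b_\ell$.

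Then I would take $\varepsilon:=\min\{\eta_\ell/(2|a_\ell|):\ell\text{ slack at }U\}$, with the convention that this minimum equals, say, $1$ if no constraint is slack (which happens only at a vertex where all constraints are tight). This $\varepsilon$ is strictly positive, and for any $dU$ with $|dU|\le\varepsilon$ and $U+dU\in\mathcal{P}_1$, the two cases above give $\langle a_\ell,U+2\,dU\rangle\le b_\ell$ for every $\ell$, hence $U+2\,dU\in\mathcal{P}_1$, which by the reduction finishes the proof.

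I do not expect a real obstacle here; the only point requiring a little care is the non-full-dimensional case, handled by the doubled inequalities describing the affine hull (such a constraint is tight from both sides at every point of $\mathcal{P}_1$, forcing $\langle a_\ell,dU\rangle=0$, so it imposes nothing along the segment). One could alternatively argue intrinsically inside the affine hull of $\mathcal{P}_1$ using compactness of the unit sphere, but the polyhedral bookkeeping above is the most transparent route.
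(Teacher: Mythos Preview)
Your proposal is correct and follows essentially the same approach as the paper: both arguments write $\mathcal{P}_1$ as a finite intersection of halfspaces, split into constraints that are tight versus slack at $U$, observe that tight constraints are preserved along the whole ray once $U+dU\in\mathcal{P}_1$, and pick $\varepsilon$ small enough (half the distance to the boundary) to handle the slack ones. Your write-up is slightly more explicit---the reduction to $t=2$ by convexity and the treatment of the non-full-dimensional case via doubled inequalities are details the paper leaves implicit---but the underlying idea is identical.
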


\begin{proof}
	Since $\mathcal P_1$ is a polytope, it is an intersection of a finite number of closed half-spaces $H_i$. Observe that the desired property holds for each $H_i$. Indeed, if the point $U$ is in the interior of $H_i$, we can take $\varepsilon$ to be half of the distance from $U$ to the boundary of $H_i$, while if $U$ is on the boundary of $H_i$, the entire ray $\{U+tdU \mid t\geq 0\}$ is contained in $H_i$ and we can take $\varepsilon$ to be any positive real number. It follows that the desired property also holds for the intersection of the half-spaces $H_i$, which is $\mathcal P_1$.
\end{proof}

The next three lemmas will be useful for analyzing error terms obtained in the main analysis.

\begin{lemma}\label{lem:first-order}
	Let $\prof\in\profiles_{>0}$, $U = \mathcal U(\prof)$, and $dU\in\mathbb R^n$ such that $U + dU\in\mathcal P_1$.
	Then,
	\[
		\sum_{i\in N} C_i \frac{dU_i}{U_i}\le 0\text.
	\]
	If also $U - dU\in\mathcal P_1$, then equality holds.
\end{lemma}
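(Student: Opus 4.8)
The plan is to recognize this as a standard first-order optimality condition and to prove it by restricting $F_\prof$ to the line segment from $U$ towards $U + dU$. First I would note that, since $\prof \in \profiles_{>0}$, \Cref{lem:nashmax} guarantees $U_i \ge C_i > 0$ for all $i\in N$, so the map $t \mapsto F_\prof(U + t\,dU) = \sum_{i\in N} C_i \log(U_i + t\,dU_i)$ is well-defined and differentiable on a neighborhood of $t = 0$.

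Next I would exploit convexity of $\mathcal P_1$. Since $U \in \mathcal P_1$ and $U + dU \in \mathcal P_1$, the whole segment $\{U + t\,dU : t\in[0,1]\}$ lies in $\mathcal P_1$. Define $h(t) = F_\prof(U + t\,dU)$ for $t\in[0,1]$. Because $U = \mathcal U(\prof)$ is the (unique, by \Cref{lem:nashmax}) maximizer of $F_\prof$ over $\mathcal P_1$, the function $h$ attains its maximum on $[0,1]$ at $t = 0$. Hence its right derivative at $0$ is nonpositive, and since $h$ is differentiable near $0$, $h'(0) \le 0$. Computing
\[
  h'(0) = \sum_{i\in N} C_i \frac{dU_i}{U_i} \le 0,
\]
which is the claimed inequality.

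For the equality statement, I would simply apply the inequality a second time with $dU$ replaced by $-dU$: the hypothesis $U - dU \in \mathcal P_1$ makes this legitimate, and it yields $\sum_{i\in N} C_i (-dU_i)/U_i \le 0$, i.e.\ $\sum_{i\in N} C_i \, dU_i/U_i \ge 0$. Combining the two inequalities forces equality.

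There is no real obstacle here; the only points requiring a word of care are (i) that $U_i > 0$ so the logarithm is differentiable, which is exactly what \Cref{lem:nashmax} supplies, and (ii) that the segment stays feasible, which follows from $\mathcal P_1$ being a (convex) polytope. Everything else is a one-line differentiation.
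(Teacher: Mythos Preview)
The proposal is correct and follows essentially the same approach as the paper's own proof: restrict $F_\prof$ to the segment from $U$ toward $U+dU$, use optimality of $U$ to conclude the one-sided derivative at $0$ is nonpositive, and for the equality case apply the inequality again to $-dU$. Your write-up is slightly more explicit about convexity of $\mathcal P_1$ and about invoking \Cref{lem:nashmax} for $U_i>0$, but the argument is the same.
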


\begin{proof}
		Consider the function $\tau\colon [0,1]\rightarrow \mathbb R$ with $\tau(t) = F_{\prof}(U + tdU)$ and observe that $\tau$ attains its maximum at $0$.
		Since $U_i> 0$ for all $i\in N$ by \Cref{lem:nashmax}, $\tau$ is differentiable at $0$.
		Hence, the right derivative of $\tau$ at $0$ is non-positive, i.e.,
		\[
			\frac{\partial \tau}{\partial t}\big |_{t = 0} = \frac{\partial}{\partial t}\left( \sum_{i\in N} C_i \log (U_i + t dU_i)\right)\big |_{t = 0} = \sum_{i\in N} C_i \frac{dU_i}{U_i} \le 0 \text.
		\]
		If additionally $U - dU\in\mathcal P_1$, the first part implies $-\sum_{i\in N} C_i\frac{dU_i}{U_i}\le 0$, from which equality follows.
\end{proof}

	\begin{lemma}\label{lem:auxinequality2}
		Let $\prof\in\profiles$, $x\in \mathbb R^n$, and $\alpha,\beta > 0$ such that $\sum_{i\in N} C_i x_i = 0$ and $-\alpha\le x_i \le \beta$ for all $i\in N$.
		Then,
		\[
			\sum_{i\in N} C_i x_i^2 \le \alpha\beta\sum_{i\in N} C_i\text.
		\]
	\end{lemma}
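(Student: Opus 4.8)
The plan is to exploit the fact that a quadratic is dominated by a well-chosen affine function on any interval containing its relevant range. Concretely, fix $i\in N$. Since $-\alpha\le x_i\le\beta$, both factors of the product $(x_i+\alpha)(\beta-x_i)$ are non-negative, so $(x_i+\alpha)(\beta-x_i)\ge 0$. Expanding this out gives
\[
	x_i^2 \le (\beta-\alpha)\,x_i + \alpha\beta\text.
\]
This is the only real idea in the proof: the chord of the parabola $y=t^2$ over the interval $[-\alpha,\beta]$ lies above the parabola, and it happens to have intercept exactly $\alpha\beta$.

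Next I would multiply the displayed inequality by $C_i$, which is legitimate and preserves the direction of the inequality because $C_i\ge 0$ (recall $\prof\in\profiles$ means $C_i\in[0,B_i]$). Summing over all $i\in N$ yields
\[
	\sum_{i\in N} C_i x_i^2 \le (\beta-\alpha)\sum_{i\in N} C_i x_i + \alpha\beta\sum_{i\in N} C_i\text.
\]
Now I would invoke the hypothesis $\sum_{i\in N} C_i x_i = 0$ to kill the middle term, leaving exactly the claimed bound $\sum_{i\in N} C_i x_i^2 \le \alpha\beta\sum_{i\in N} C_i$.

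There is essentially no obstacle here; the proof is two lines once the factorization $(x_i+\alpha)(\beta-x_i)\ge 0$ is spotted. The only thing to be careful about is that the lemma statement does not require $C_i>0$, so I would make sure the argument only uses $C_i\ge 0$ (which it does, since multiplying an inequality by a non-negative scalar and summing non-negative scalars both go through). One could also remark that equality holds iff each $x_i\in\{-\alpha,\beta\}$, which matches the way the bound is applied in \Cref{step:derivative} of the proof of \Cref{thm:nashir}, though this observation is not needed for the statement.
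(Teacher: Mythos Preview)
Your proof is correct and is essentially the same approach as the paper's: both bound each $x_i^2$ using only $-\alpha\le x_i\le\beta$ and then invoke $\sum_i C_i x_i=0$. The paper phrases the pointwise bound as $\bigl|x_i-\tfrac{\beta-\alpha}{2}\bigr|\le\tfrac{\beta+\alpha}{2}$ and completes the square, which is algebraically equivalent to your factorization $(x_i+\alpha)(\beta-x_i)\ge0$; if anything, your version is slightly more direct.
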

	
\begin{proof}
		Since $-\alpha\leq x_i\leq \beta$, we have $\left|x_i-\frac{\beta-\alpha}{2}\right|\leq \frac{\beta+\alpha}{2}$. It follows that
	\begin{align*}
	    \sum_{i\in N} C_ix_i^2 
	    &= \sum_{i\in N}C_i\left(x_i-\frac{\beta-\alpha}{2}\right)^2 - \left(\frac{\beta-\alpha}{2}\right)^2\sum_{i\in N}C_i \\
	    &\leq \left(\frac{\beta+\alpha}{2}\right)^2\sum_{i\in N}C_i - \left(\frac{\beta-\alpha}{2}\right)^2\sum_{i\in N}C_i \\
	    &= \alpha\beta\sum_{i\in N}C_i,
	\end{align*}
	as claimed.
\end{proof}

For the proof of \Cref{lem:lintwicemax}, we need the following auxiliary lemma.

	\begin{lemma}\label{lem:auxinequality}
		Let $\lambda^\ast \in (0,\frac12)$. 
		Then, there are $\varepsilon^\ast \in (0,1)$ and $t\in[1,2]$ such that
		\[
			t - \lambda \frac{1 + \varepsilon}{1-\varepsilon} t^2 > 1-\lambda \quad\text{for all $\lambda\in[0,\lambda^\ast]$ and $\varepsilon \in (0,\varepsilon^\ast)$.}
		\]	
	\end{lemma}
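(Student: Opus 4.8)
The plan is to reduce everything to the elementary case $\varepsilon = 0$ and then treat the factor $\tfrac{1+\varepsilon}{1-\varepsilon}$ as a small perturbation. Write $p(t,\lambda,\varepsilon) := t - \lambda\,\tfrac{1+\varepsilon}{1-\varepsilon}\,t^2 - (1-\lambda)$, so that we need a single $t \in [1,2]$ and some $\varepsilon^\ast \in (0,1)$ with $p(t,\lambda,\varepsilon) > 0$ for all $\lambda \in [0,\lambda^\ast]$ and $\varepsilon \in (0,\varepsilon^\ast)$. For $\varepsilon = 0$ the expression factors:
\[
p(t,\lambda,0) = (t-1) - \lambda(t^2-1) = (t-1)\bigl(1 - \lambda(t+1)\bigr),
\]
which for $t > 1$ is positive exactly when $t < \tfrac1\lambda - 1$. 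Since $\lambda \le \lambda^\ast < \tfrac12$, we have $\tfrac1\lambda - 1 \ge \tfrac1{\lambda^\ast} - 1 > 1$, so the interval $\bigl(1,\ \min\{2,\tfrac1{\lambda^\ast}-1\}\bigr)$ is nonempty; fix any $t_0$ in it, noting $t_0 \in [1,2]$ and $t_0 > 1$. Because $\lambda \mapsto p(t_0,\lambda,0)$ is affine and strictly decreasing (as $t_0 > 1$), its minimum over $[0,\lambda^\ast]$ is attained at $\lambda^\ast$, giving the uniform positive lower bound
\[
p(t_0,\lambda,0) \ge c := p(t_0,\lambda^\ast,0) = (t_0-1)\bigl(1 - \lambda^\ast(t_0+1)\bigr) > 0 \qquad \text{for all } \lambda \in [0,\lambda^\ast].
\]

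For the perturbation step, I would note that for any $\varepsilon \in [0,1)$ and $\lambda \in [0,\lambda^\ast]$,
\[
p(t_0,\lambda,\varepsilon) = p(t_0,\lambda,0) - \lambda t_0^2\Bigl(\tfrac{1+\varepsilon}{1-\varepsilon}-1\Bigr) = p(t_0,\lambda,0) - \lambda t_0^2\cdot\frac{2\varepsilon}{1-\varepsilon} \ge c - 8\lambda^\ast\cdot\frac{\varepsilon}{1-\varepsilon},
\]
using $0 \le \lambda \le \lambda^\ast$ and $t_0 \le 2$. Since $\varepsilon \mapsto \tfrac{\varepsilon}{1-\varepsilon}$ is continuous and increasing on $[0,1)$ and vanishes at $0$, we may choose $\varepsilon^\ast \in (0,1)$ with $\tfrac{\varepsilon^\ast}{1-\varepsilon^\ast} < \tfrac{c}{8\lambda^\ast}$; then for every $\varepsilon \in (0,\varepsilon^\ast)$ and every $\lambda \in [0,\lambda^\ast]$ the displayed chain yields $p(t_0,\lambda,\varepsilon) > c - 8\lambda^\ast\cdot\tfrac{c}{8\lambda^\ast} = 0$, which is the claim with $t = t_0$.

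The argument is entirely elementary, so there is no real obstacle; the only points needing a little care are (i) making the quantifiers over $\lambda$ and over $\varepsilon$ compatible, which is why one extracts the uniform bound $c$ rather than arguing pointwise in $\lambda$, and (ii) observing that the hypothesis $\lambda^\ast < \tfrac12$ is exactly what makes the interval $\bigl(1,\min\{2,\tfrac1{\lambda^\ast}-1\}\bigr)$ nonempty, so that an admissible $t_0 \in [1,2]$ with $t_0 > 1$ exists at all (at $t = 1$ one has $p(1,\lambda,\varepsilon) = -\lambda\tfrac{2\varepsilon}{1-\varepsilon} \le 0$, so $t_0 > 1$ is essential).
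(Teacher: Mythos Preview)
Your proof is correct and follows essentially the same approach as the paper: both pick $t_0 \in (1,\min\{2,\tfrac{1}{\lambda^\ast}-1\})$ so that the $\varepsilon=0$ inequality holds uniformly in $\lambda\in[0,\lambda^\ast]$, and then argue by continuity in $\varepsilon$. The only cosmetic difference is that the paper rewrites the inequality as $\lambda < \tfrac{t-1}{\frac{1+\varepsilon}{1-\varepsilon}t^2-1}$ and invokes the limit $\varepsilon\to 0$, whereas you keep the original form and control the perturbation with an explicit bound; your version is slightly more quantitative but the idea is identical.
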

	
	\begin{proof}
	The inequality in the statement can be rewritten as 
$\lambda < \frac{t-1}{\frac{1+\varepsilon}{1-\varepsilon}t^2-1}.$
Choose an arbitrary $t\in (1,\frac{1}{\lambda^*} - 1)$. We have $t\in[1,2]$ and $\lambda^* < \frac{1}{1+t}$. Since $\lim_{\varepsilon\rightarrow 0}\frac{t-1}{\frac{1+\varepsilon}{1-\varepsilon}t^2-1}=\frac{1}{1+t}$, we can choose $\varepsilon^* \in (0,1)$ such that $\lambda^* < \frac{t-1}{\frac{1+\varepsilon}{1-\varepsilon}t^2-1}$ for all $\varepsilon\in(0,\varepsilon^*)$. It follows that $\lambda < \frac{t-1}{\frac{1+\varepsilon}{1-\varepsilon}t^2-1}$ for all $\lambda\in[0,\lambda^*]$ and $\varepsilon\in(0,\varepsilon^*)$, as desired.
	\end{proof}

	\begin{lemma}\label{lem:lintwicemax}
		For all $\mu\in (0,2)$ there is $\varepsilon^\ast \in (0,1)$ with the following property.
		For any $\Phi\colon [0,2]\rightarrow \mathbb R$ such that $\Phi(1) = \max_{t\in [0,2]} \Phi(t)$ and such that there are $\alpha,\beta\ge 0$ and $\varepsilon\in (0,\varepsilon^\ast)$ with
		\begin{align}
			\alpha t - (1+\varepsilon) \beta t^2 \le \Phi(t) \le \alpha t - (1-\varepsilon) \beta t^2 \label{eq:phiineq}
		\end{align}
		for all $t\in [0,2]$, it holds that $\alpha \ge \mu\Phi(1)$.
	\end{lemma}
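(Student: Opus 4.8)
The plan is to extract two consequences of the hypotheses and combine them with \Cref{lem:auxinequality}. First, evaluating the squeeze \eqref{eq:phiineq} at $t=0$ forces $\Phi(0)=0$, so $M:=\Phi(1)\ge\Phi(0)=0$ since $\Phi(1)$ is the maximum; if $M=0$ we are done because $\alpha\ge0$. Evaluating the upper bound in \eqref{eq:phiineq} at $t=1$ gives $M\le\alpha-(1-\varepsilon)\beta\le\alpha$, so whenever $M>0$ we also have $\alpha>0$; in particular the claim holds immediately when $\mu\le1$ (for which we may take $\varepsilon^\ast$ to be any value in $(0,1)$). It therefore remains to handle $\mu\in(1,2)$ with $M>0$ and $\alpha>0$.

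For such $\mu$, set $\lambda^\ast=\tfrac{\mu-1}{\mu}$, which lies in $(0,\tfrac12)$ precisely because $1<\mu<2$, and let $\varepsilon^\ast\in(0,1)$ be the constant furnished by \Cref{lem:auxinequality} for this $\lambda^\ast$. Combining the lower bound $\Phi(t)\ge\alpha t-(1+\varepsilon)\beta t^2$ with $\Phi(t)\le\Phi(1)\le\alpha-(1-\varepsilon)\beta$ and dividing by $\alpha>0$, with $\lambda:=\beta/\alpha\ge0$, yields
\[
t-(1+\varepsilon)\lambda t^2\le 1-(1-\varepsilon)\lambda\qquad\text{for all }t\in[0,2].
\]
Substituting $\tilde\lambda=(1-\varepsilon)\lambda$ rewrites this as $t-\tfrac{1+\varepsilon}{1-\varepsilon}\tilde\lambda t^2\le 1-\tilde\lambda$ for all $t\in[0,2]$. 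Were $\tilde\lambda\le\lambda^\ast$, \Cref{lem:auxinequality} would produce a $t\in[1,2]$ violating this strictly; hence $\tilde\lambda>\lambda^\ast$. Plugging this into $M\le\alpha-(1-\varepsilon)\beta=\alpha(1-\tilde\lambda)$ gives $M<\alpha(1-\lambda^\ast)=\alpha/\mu$, i.e.\ $\alpha>\mu\Phi(1)$, which is at least as strong as the claim.

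The only genuine difficulty is the one packaged into \Cref{lem:auxinequality}: the fact that $\Phi$ is maximized at $t=1$ forces the curvature coefficient $\beta$ to be at least roughly $\tfrac12\alpha$, but that bound is only approached by test points $t$ close to $1$, where the margin $t-1$ shrinks, so the width $\varepsilon$ of the squeeze in \eqref{eq:phiineq} has to be chosen small in a manner coupled to how close $t$ is to $1$. Quantitatively one needs $\lambda^\ast<\lim_{\varepsilon\to0}\tfrac{t-1}{\frac{1+\varepsilon}{1-\varepsilon}t^2-1}=\tfrac1{1+t}$, and this limit tends to $\tfrac12$ as $t\to1^+$; the hypothesis $\mu<2$ is exactly what makes $\lambda^\ast=\tfrac{\mu-1}{\mu}<\tfrac12$, so that an appropriate $t\in[1,2]$ and a sufficiently small $\varepsilon^\ast$ exist. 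Everything else is routine manipulation of the quadratic $t\mapsto\alpha t-\beta t^2$.
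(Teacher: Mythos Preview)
Your proof is correct and follows essentially the same approach as the paper's: both reduce the case $\mu\le 1$ immediately via the upper bound at $t=1$, set $\lambda^\ast=1-\tfrac1\mu$ for $\mu\in(1,2)$, and then invoke \Cref{lem:auxinequality} to force the curvature parameter above $\lambda^\ast$, contradicting (or precluding) the maximality of $\Phi(1)$. The only cosmetic difference is the choice of the scaled parameter: you work with $\tilde\lambda=(1-\varepsilon)\beta/\alpha$ and derive the inequality $t-\tfrac{1+\varepsilon}{1-\varepsilon}\tilde\lambda t^2\le 1-\tilde\lambda$ directly from $\Phi(t)\le\Phi(1)$, whereas the paper introduces $\Psi(t)=\alpha t-\Phi(t)$ and uses $\lambda=\Psi(1)/\alpha=(\alpha-\Phi(1))/\alpha$, arriving at the same inequality by bounding $\Psi(t^\ast)\le\tfrac{1+\varepsilon}{1-\varepsilon}\Psi(1)(t^\ast)^2$; the two parameters satisfy $\tilde\lambda\le\lambda$ and either choice works.
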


\begin{proof}
	If $\mu\le 1$, then by choosing any $\varepsilon^* \in (0,1)$, we have $\mu\Phi(1)\le\Phi(1) \le \alpha$ by assumption.
	Assume henceforth that $\mu> 1$.
		Let $\lambda^*:=1-\frac{1}{\mu}>0$ and choose $\varepsilon^*>0$ and $t^*\in[1,2]$ such that 
	$$t^*-\lambda\frac{1+\varepsilon}{1-\varepsilon}(t^*)^2 > 1-\lambda$$ 
	for all $\lambda\in[0,\lambda^*]$ and $\varepsilon\in(0,\varepsilon^*)$, which is possible by \Cref{lem:auxinequality}.
	
	Let $\Phi$, $\alpha$, $\beta$, and $\epsilon$ as in the statement of the lemma.
	If $\alpha=0$, we get $\Phi(1)\leq 0$ by \Cref{eq:phiineq}.
	Hence, $\alpha\geq\mu\Phi(1)$ holds. 
	Now consider the case $\alpha>0$.
	Let $\lambda := \frac{\alpha - \Phi(1)}{\alpha} \geq 0$. 
	Assume for contradiction that the desired conclusion is not true, i.e., $\alpha < \mu\Phi(1)$. 
	This is equivalent to $\lambda < \lambda^*$. 
	Ihe function $\Psi(t):=\alpha t - \Phi(t)$ satisfies $\beta(1-\varepsilon)t^2 \leq \Psi(t) \leq \beta(1+\varepsilon)t^2$.
	By substituting $t=t^*$ and $t=1$, we have $\Psi(t^*)\leq \Psi(1)\frac{1+\varepsilon}{1-\varepsilon}(t^*)^2$. 
	It follows that
	\[
	    \Phi(t^*) 
	= \alpha t^* - \Psi(t^*) 
	\geq \alpha\left(t^* - \frac{\Psi(1)}{\alpha}\frac{1+\varepsilon}{1-\varepsilon}(t^*)^2\right) 
	= \alpha\left(t^* - \lambda\frac{1+\varepsilon}{1-\varepsilon}(t^*)^2\right) 
	> \alpha(1-\lambda) 
	= \Phi(1).
	\]
	This contradicts the assumption that $\Phi(1)=\max_{t\in[0,2]}\Phi(t)$.
\end{proof}

\section{Characterization of Decomposability}
\label{app:decomp}

Recall that for $i\in N$, $A_i = \{a\in A\colon u_i(a) > 0\}$ denotes the support of $u_i$.
Moreover, let $\bar A_i = \argmax\{u_i(a)\colon a\in A\}$.

\begin{proposition}\label{pro:decomposablecharacterization}
	Let $\delta\in\Delta(\csum)$ be a distribution.
	Then, $\delta$ is 
	\begin{enumerate}[label=(\roman*)]
		\item decomposable if and only if for every $N'\subseteq N$, $\sum_{x\in \bigcup_{i\in N'} A_i} \delta(x) \ge \sum_{i\in N'} C_i$;\label{item:decomp}
		\item strongly decomposable if and only if for every $N'\subseteq N$, $\sum_{x\in \bigcup_{i\in N'} \bar A_i} \delta(x) \ge \sum_{i\in N'} C_i$.\label{item:strongdecomp}
	\end{enumerate}
\end{proposition}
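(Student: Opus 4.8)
The plan is to recognize \Cref{pro:decomposablecharacterization} as a Hall-type feasibility statement for a transportation problem and to prove both parts by the same argument; since \ref{item:strongdecomp} is obtained from \ref{item:decomp} verbatim upon replacing each $A_i$ by $\bar A_i$, I will only describe \ref{item:decomp}. Write $A(N') := \bigcup_{i\in N'} A_i$. The forward direction is a counting argument: if $\delta = \sum_{i\in N}\delta_i$ with $\delta_i\in\Delta(C_i)$ and $\supp(\delta_i)\subseteq A_i$, then for any $N'\subseteq N$ and $i\in N'$ we have $\supp(\delta_i)\subseteq A_i\subseteq A(N')$, so $\delta_i(A(N')) = C_i$; summing over all $i\in N$ and using $\delta_i \ge 0$ gives $\sum_{x\in A(N')}\delta(x) = \sum_{i\in N}\delta_i(A(N')) \ge \sum_{i\in N'}C_i$, which is the stated inequality.

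For the converse, I would encode a decomposition of $\delta$ as an $s$--$t$ flow. Build a network with a source $s$, a sink $t$, one node per agent and one per project, arcs $s\to i$ of capacity $C_i$, arcs $i\to x$ of capacity $+\infty$ for every $x\in A_i$, and arcs $x\to t$ of capacity $\delta(x)$. An $s$--$t$ flow of value $\csum$ in this network yields the desired decomposition: put $\delta_i(x)$ equal to the flow on $i\to x$; saturation of the arcs out of $s$ forces $\delta_i\in\Delta(C_i)$ with $\supp(\delta_i)\subseteq A_i$, and since the flow value $\csum=\sum_x\delta(x)$ equals the total capacity of the arcs into $t$, all of those arcs are saturated as well, i.e.\ $\sum_{i\in N}\delta_i(x)=\delta(x)$ for every $x$. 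So it suffices to show that the Hall hypothesis forces the maximum flow to be $\csum$.

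The key step is the max-flow--min-cut theorem, in the form valid for arbitrary nonnegative real capacities (equivalently, the standard feasibility condition for transportation problems, obtainable from LP duality). No finite $s$--$t$ cut can sever an $\infty$-capacity arc $i\to x$, so if agent $i$ lies on the source side of a finite cut, then so does every $x\in A_i$. Hence a finite cut of minimum capacity is obtained by choosing a set $N_S\subseteq N$ of agents for the source side and placing exactly the projects of $A(N_S)$ on the source side; its capacity is $\bigl(\csum - \sum_{i\in N_S}C_i\bigr) + \sum_{x\in A(N_S)}\delta(x)$. Under the hypothesis this is at least $\csum$ for every $N_S$, so the minimum cut, and therefore the maximum flow, equals $\csum$, which completes the proof. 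For \ref{item:strongdecomp}, the identical network with the arcs $i\to x$ present exactly when $x\in\bar A_i$ yields the condition with $\bar A_i$ in place of $A_i$.

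I expect the only points needing care to be: (i) quoting the real-valued (non-integral) form of max-flow--min-cut rather than the textbook integer version; and (ii) the bookkeeping showing that finite cuts correspond to subsets $N_S$ of agents, and in particular that one need not separately impose the ``dual'' Hall condition over subsets of projects — it is automatic once total supply $\csum$ equals total demand $\sum_x\delta(x)$. Neither is a serious obstacle; the substance of the argument is simply choosing the right network, after which everything is routine.
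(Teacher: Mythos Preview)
Your argument is correct. Both directions are handled cleanly, and the cut analysis is right: finite cuts force $A(N_S)$ onto the source side, and minimizing over the remaining freedom puts exactly $A(N_S)$ there, giving capacity $\bigl(\csum - \sum_{i\in N_S}C_i\bigr) + \sum_{x\in A(N_S)}\delta(x)\ge \csum$ under the hypothesis.

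Your route differs from the paper's. The paper writes down an explicit primal linear program (variables $x_{ij}$ for the amount agent $i$ sends to project $j$, with the decomposability constraint encoded as $\sum_{j\notin A_i}x_{ij}\le 0$) and its dual, and then performs an iterative transformation of any optimal dual solution to show that the dual variables attached to the decomposability constraints can be driven to zero; this reduces the dual to that of the unconstrained transportation problem, whose value is visibly $\csum$. Your max-flow/min-cut formulation is the same duality in disguise but packages it more transparently: the infinite-capacity arcs $i\to x$ play the role of the decomposability constraints, and the cut structure immediately yields the Hall-type inequalities without any manipulation of dual variables. The trade-off is that the paper's argument stays self-contained within elementary LP duality, whereas you invoke max-flow/min-cut as a black box (the real-capacity version, which is indeed a consequence of LP strong duality and poses no difficulty). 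Your version is shorter and, for readers familiar with network flows, more immediately recognizable as a Hall-type statement.
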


\begin{proof}
	We prove \ref{item:decomp}.
	It is easy to see that the inequalities hold if $\delta$ is decomposable.
	We prove the converse direction by an application of the strong duality theorem.
	A distribution $\delta$ is decomposable if and only if the following linear program $P$ has a solution with value $\csum$.\medskip\\
	\renewcommand{\arraystretch}{1.7}
	\begin{tabular}{cccccc}
	&primal ($P$)&&&&dual ($D$)\\
	&$\max \sum_{i \in N}\sum_{j \in A}x_{ij}$&&\quad&&$\min \sum_{i \in N}C_i y_i + \sum_{j \in A}\delta(j)y_j$\\
	s.t.&$\sum_{j \in A}x_{ij} \leq C_i$&$\forall i \in N$&&s.t.&$\forall (i,j) \in N \times A$:\\
	&$\sum_{j \notin A_i}x_{ij} \leq 0$&$\forall i \in N$&&&\multirow{2}{*}{$y_i+y_j \geq \begin{cases}
	1 & j \in A_i \\
	1-y_{n+i} & j \notin A_i
	\end{cases}$}\\
	&$\sum_{i \in N}x_{ij} \leq \delta(j)$&$\forall j \in A$\\
	&$x \geq 0$&&&&$y \geq 0$\\
	\end{tabular}
	\renewcommand{\arraystretch}{1}
	\medskip \\
	with $x=(x_{1a},x_{1b},\cdots,x_{2a},\cdots) \in \mathbb{R}_{\geq 0}^{n\cdot m}$ and $y=(y_1,\cdots,y_{2n},y_a,y_b,\cdots) \in \mathbb{R}_{\geq 0}^{2n+m}$. The second constraint corresponds to decomposability whereas $x_{ij}$ represents a possible contribution of agent $i$ on project $j$. Hence, $P$ has a solution with value $\csum$ if and only if $\delta$ can be decomposed according to $x$.\\
	Assuming $\sum_{j\in \bigcup_{i\in N'} A_i} \delta(j) \ge \sum_{i\in N'} C_i$ for every $N'\subseteq N$, we claim that there always exists a solution $y^*$ to its dual $D$ such that $y^*_{n+1}=\cdots=y^*_{2n}=0$. This means that we can reduce $D$ to $D'$ where the first constraint simplifies to $y_i+y_j \geq 1$ for all $(i,j) \in N \times A$. Looking at the dual of $D'$ called $P'$, we observe that compared to $P$, the constraint $\sum_{j \notin A_i}x_{ij} \leq 0$ for all $i \in N$ is removed and $P'$ has value $\csum$ as $\delta \in \Delta(\csum)$.\\
	As all of the stated problems have optimal solutions, the strong duality theorem implies that all four linear programs have the same optimal value $\csum$ and thus, $\delta$ is decomposable as then, $P$ has a solution with value $\csum$.
	\medskip \\
	To prove the claim, let $y$ be a solution to $D$ and $A_0$ the set of all $j \in A$ with $y_j=0$. This implies that for all $i$ with $A_i \cap A_0 \neq \emptyset$, $y_i \geq 1$ and we can set $y_{n+i}=0$. Denote the set of all such agents by $N_0$.\\
	Let $N'= N\setminus N_0$ and $j'=\argmin_{j\in \bigcup_{i\in N'} A_i} y_j$. Define $y_i'=y_i+y_{j'}$ for all $i \in N'$, $y'_j=y_j-y_{j'}$ for all $j\in \bigcup_{i\in N'} A_i$ and $y'_j=y_j$, otherwise. By construction, $y'$ is still feasible and 
	\begin{align*}
	&\sum_{i \in N}C_i y_i + \sum_{j \in A}\delta(j)y_j \\
	&= \sum_{i \in N\setminus N'}C_i y_i'+\sum_{i \in N'}C_i (y_i'-y_{j'})+\sum_{j \in \bigcup_{i\in N'} A_i}\delta(j)(y_j'+y_{j'})+\sum_{j \notin \bigcup_{i\in N'} A_i}\delta(j)y_j' \\
	&\geq \sum_{i \in N}C_i y_i'-\sum_{i \in N'}C_i y_{j'}+\sum_{i \in N'}C_iy_{j'}+\sum_{j \in A}\delta(j)y_j'=\sum_{i \in N}C_i y'_i + \sum_{j \in A}\delta(j)y'_j
	\end{align*}
	as $y_{j'}\sum_{j \in \bigcup_{i\in N'} A_i}\delta(j) \geq y_{j'}\sum_{i \in N'}C_i$ by assumption. \\
	Iterating this procedure with $y=y'$ until $N'=\emptyset$, we end in a solution $y^*$ to $D$ with $y^*_{n+1}=\cdots=y^*_{2n}=0$.
	\medskip \\
	The proof for \ref{item:strongdecomp} proceeds along the same lines. 
	The only difference is that now $\bar A_i$ is used instead of $A_i$.
\end{proof}

\Cref{pro:decomposablecharacterization} implies that, for the special case of dichotomous preferences, decomposability is equivalent to the \emph{fair group share} axiom introduced by \citet{BMS02a} and later dubbed \emph{proportional sharing} by \citet{Dudd15a}.

\section{Independence of Contribution Incentive-Compatibility and Decomposability}
\label{app:contributionnotdecomposable}

We first define a mechanism that satisfies decomposability but violates contribution incentive-compatibility. To this end, we consider a rule that always returns the decomposable distribution with \emph{minimal} utilitarian welfare and thus represents an antipode to the conditional utilitarian rule $\cut$ introduced by \citet{Dudd15a},  Let $A_i^{\text{min}}=\{a \in A\colon u_i(a) > 0 \text{ and } \sum_{x \in N} u_x(a) \leq \sum_{x \in N} u_x(b) \text{ for all } b\text{ with } u_i(b)>0\}$. Then, 
\[
\anticut(\prof)=\sum_{i \in N} \sum_{x \in A_i^{\text{min}}} \frac{C_i}{|A_i^{\text{min}}|} \cdot x\text.
\]
This mechanism is decomposable by construction but fails to satisfy contribution incentive-compatibility.
For example, let $N = \{1,2\}$, $A = \{a,b\}$.
Let $u_1 = 1_{\{a,b\}}$ and $u_2 = 1_{\{a\}}$, and $B_1 = B_2 = 1$.\footnote{For $B\subset A$, denote by $1_B$ the dichotomous utility function with $1_B(x) = 1$ for $x \in B$ and $1_B(x) = 0$ for $x \not\in B$.}
Then, $\anticut((1,1))= a + b$ and $\anticut((1,0))=0.5\cdot a + 0.5\cdot b$.
Since 
\[
	u_2(0.5\cdot a + 0.5\cdot b) - 0 = 0.5 > 0 = u_2(a + b) - 1,
\]
\anticut violates contribution incentive-compatibility.
\medskip\\
Second, we construct a mechanism that is contribution incentive-compatible but violates decomposability for 
$N = \{1,2,3\}$ and $A = \{a,b,c,d\}$.
(This mechanism can be straightforwardly extended to more agents and projects.)
Let $u_1 = 1_{\{a,b\}}$, $u_2 = 1_{\{a,c\}}$, and $u_3 = 1_{\{d\}}$, and $B_1 = B_2 = B_3 = 1$.

We define $f(\prof) = \min\{C_1,C_2\}\cdot a + (C_1 - \min\{C_1,C_2\})\cdot b + (C_2 - \min\{C_1,C_2\})\cdot c + (\min\{C_1,C_2\} + C_3)\cdot d$.
Then, $f$ is not decomposable since, for example, if $C_1 = C_2 = C_3 = 1$, $f(\prof) = a + 2\cdot d$, which is not decomposable. 
On the other hand, one can (by distinguishing the cases $C_1 < C_2$ and $C_1 \ge C_2$) verify that $f$ is contribution incentive-compatible.


\begin{thebibliography}{30}
\providecommand{\natexlab}[1]{#1}
\providecommand{\url}[1]{\texttt{#1}}
\expandafter\ifx\csname urlstyle\endcsname\relax
  \providecommand{\doi}[1]{doi: #1}\else
  \providecommand{\doi}{doi: \begingroup \urlstyle{rm}\Url}\fi

\bibitem[Aziz and Shah(2020)]{AzSh20a}
H.~Aziz and N.~Shah.
\newblock Participatory budgeting: Models and approaches.
\newblock In T.~Rudas and P.~G\'{a}bor, editors, \emph{Pathways between Social
  Science and Computational Social Science: Theories, Methods and
  Interpretations}. Springer, 2020.

\bibitem[Aziz et~al.(2015)Aziz, Brandl, and Brandt]{ABB14b}
H.~Aziz, F.~Brandl, and F.~Brandt.
\newblock Universal {P}areto dominance and welfare for plausible utility
  functions.
\newblock \emph{Journal of Mathematical Economics}, 60:\penalty0 123--133,
  2015.

\bibitem[Aziz et~al.(2019)Aziz, Bogomolnaia, and Moulin]{ABM17a}
H.~Aziz, A.~Bogomolnaia, and H.~Moulin.
\newblock Fair mixing: the case of dichotomous preferences.
\newblock In \emph{Proceedings of the 20th ACM Conference on Economics and
  Computation (ACM-EC)}, pages 753--781, 2019.

\bibitem[Bergstrom et~al.(1986)Bergstrom, Blume, and Varian]{BBV86a}
T.~Bergstrom, L.~Blume, and H.~Varian.
\newblock On the private provision of public goods.
\newblock \emph{Journal of Public Economics}, 29\penalty0 (1):\penalty0 25--49,
  1986.

\bibitem[Bogomolnaia et~al.(2002)Bogomolnaia, Moulin, and Stong]{BMS02a}
A.~Bogomolnaia, H.~Moulin, and R.~Stong.
\newblock Collective choice under dichotomous preferences.
\newblock {M}imeo, 2002.

\bibitem[Bogomolnaia et~al.(2005)Bogomolnaia, Moulin, and Stong]{BMS05a}
A.~Bogomolnaia, H.~Moulin, and R.~Stong.
\newblock Collective choice under dichotomous preferences.
\newblock \emph{Journal of Economic Theory}, 122\penalty0 (2):\penalty0
  165--184, 2005.

\bibitem[Brandl et~al.(2015)Brandl, Brandt, and Hofbauer]{BBH15b}
F.~Brandl, F.~Brandt, and J.~Hofbauer.
\newblock Incentives for participation and abstention in probabilistic social
  choice.
\newblock In \emph{Proceedings of the 14th International Conference on
  Autonomous Agents and Multiagent Systems (AAMAS)}, pages 1411--1419, 2015.

\bibitem[Brandl et~al.(2021)Brandl, Brandt, Peters, and Stricker]{BBPS21a}
F.~Brandl, F.~Brandt, D.~Peters, and C.~Stricker.
\newblock Distribution rules under dichotomous preferences: {T}wo out of three
  ain't bad.
\newblock In \emph{Proceedings of the 22nd ACM Conference on Economics and
  Computation (ACM-EC)}, pages 158--179, 2021.

\bibitem[Brandt(2017)]{Bran17a}
F.~Brandt.
\newblock Rolling the dice: {R}ecent results in probabilistic social choice.
\newblock In U.~Endriss, editor, \emph{Trends in Computational Social Choice},
  chapter~1, pages 3--26. AI Access, 2017.

\bibitem[Cabannes(2004)]{Caba04a}
Y.~Cabannes.
\newblock Participatory budgeting: a significant contribution to participatory
  democracy.
\newblock \emph{Environment and Urbanization}, 16\penalty0 (1):\penalty0
  27--46, 2004.

\bibitem[Caragiannis et~al.(2019)Caragiannis, Kurokawa, Moulin, Procaccia,
  Shah, and Wang]{CKM+19a}
I.~Caragiannis, D.~Kurokawa, H.~Moulin, A.~D. Procaccia, N.~Shah, and J.~Wang.
\newblock The unreasonable fairness of maximum {N}ash welfare.
\newblock \emph{ACM Transactions on Economics and Computation}, 7\penalty0 (3),
  2019.

\bibitem[Cover(1984)]{Cove84a}
T.~M. Cover.
\newblock An algorithm for maximizing expected log investment return.
\newblock \emph{IEEE Transactions on Information Theory}, 30\penalty0
  (2):\penalty0 369--373, 1984.

\bibitem[Cover and Thomas(2006)]{CoTh06a}
T.~M. Cover and J.~A. Thomas.
\newblock \emph{Elements of Information Theory}.
\newblock John Wiley \& Sons Inc., 2006.

\bibitem[Duddy(2015)]{Dudd15a}
C.~Duddy.
\newblock Fair sharing under dichotomous preferences.
\newblock \emph{Mathematical Social Sciences}, 73:\penalty0 1--5, 2015.

\bibitem[Eisenberg and Gale(1959)]{EiGa59a}
E.~Eisenberg and D.~Gale.
\newblock Consensus of subjective probabilities: The pari-mutuel method.
\newblock \emph{Annals of Mathematical Statistics}, 30\penalty0 (1):\penalty0
  165--168, 1959.

\bibitem[Fain et~al.(2016)Fain, Goel, and Munagala]{FGM16a}
B.~Fain, A.~Goel, and K.~Munagala.
\newblock The core of the participatory budgeting problem.
\newblock In \emph{Proceedings of the 12th International Conference on Web and
  Internet Economics (WINE)}, Lecture Notes in Computer Science (LNCS), pages
  384--399. Springer-Verlag, 2016.

\bibitem[Fain et~al.(2018)Fain, Munagala, and Shah]{FMS18a}
B.~Fain, K.~Munagala, and N.~Shah.
\newblock Fair allocation of indivisible public goods.
\newblock In \emph{Proceedings of the 19th ACM Conference on Economics and
  Computation (ACM-EC)}, pages 575--592, 2018.

\bibitem[Foley(1970)]{Fole70a}
D.~K. Foley.
\newblock Lindahl's solution and the core of an economy with public goods.
\newblock \emph{Econometrica}, 38\penalty0 (1):\penalty0 66--72, 1970.

\bibitem[Guerdjikova and Nehring(2014)]{GuNe14a}
A.~Guerdjikova and K.~Nehring.
\newblock Weighing experts, weighing sources: {T}he diversity value.
\newblock 2014.
\newblock Mimeo.

\bibitem[Gul and Pesendorfer(2020)]{GuPe20a}
F.~Gul and W.~Pesendorfer.
\newblock Lindahl equilibrium as a collective choice rule.
\newblock 2020.
\newblock Mimeo.

\bibitem[Hylland(1980)]{Hyll80a}
A.~Hylland.
\newblock Strategyproofness of voting procedures with lotteries as outcomes and
  infinite sets of strategies.
\newblock {M}imeo, 1980.

\bibitem[{Kelly, Jr.}(1956)]{Kell56a}
J.~L. {Kelly, Jr.}
\newblock A new interpretation of information rate.
\newblock \emph{Bell System Technical Journal}, 35\penalty0 (4):\penalty0
  917--926, 1956.

\bibitem[Lackner and Skowron(2018)]{LaSk18b}
M.~Lackner and P.~Skowron.
\newblock Consistent approval-based multi-winner rules.
\newblock In \emph{Proceedings of the 19th ACM Conference on Economics and
  Computation (ACM-EC)}, pages 47--48, 2018.

\bibitem[{Le Breton} and Weymark(2011)]{LeWe11a}
M.~{Le Breton} and J.~A. Weymark.
\newblock Arrovian social choice theory on economic domains.
\newblock In K.~J. Arrow, A.~K. Sen, and K.~Suzumura, editors, \emph{Handbook
  of Social Choice and Welfare}, volume~2, chapter~17. North-Holland, 2011.

\bibitem[Moulin(1988)]{Moul88a}
H.~Moulin.
\newblock \emph{Axioms of Cooperative Decision Making}.
\newblock Cambridge University Press, 1988.

\bibitem[Muehlhauser(2017)]{Mueh17a}
L.~Muehlhauser.
\newblock Technical and philosophical questions that might affect our
  grantmaking.
\newblock Open Philantrophy Project,
  https://www.openphilanthropy.org/blog/technical-and-philosophical-questions-might-affect-our-grantmaking,
  2017.

\bibitem[Nash(1950)]{Nash50b}
J.~F. Nash.
\newblock The bargaining problem.
\newblock \emph{Econometrica}, 18\penalty0 (2):\penalty0 155--162, 1950.

\bibitem[Peters(2019)]{Pete19b}
D.~Peters.
\newblock Economic design for effective altruism.
\newblock In J.-F. Laslier, H.~Moulin, R.~Sanver, and W.~S. Zwicker, editors,
  \emph{The Future of Economic Design}. Springer-Verlag, 2019.

\bibitem[Samuelson(1954)]{Samu54a}
P.~A. Samuelson.
\newblock The pure theory of public expenditure.
\newblock \emph{The Review of Economics and Statistics}, 36\penalty0
  (4):\penalty0 387--389, 1954.

\bibitem[Zhang(2011)]{Zhan11a}
L.~Zhang.
\newblock Proportional response dynamics in the {Fisher} market.
\newblock \emph{Theoretical Computer Science}, 412:\penalty0 2691--2698, 2011.

\end{thebibliography}
\end{document}